\documentclass[11pt]{article}
\usepackage{amsthm}
\usepackage{graphicx,epsf,subfigure}
\usepackage{amsmath,amsfonts, natbib,verbatim,booktabs,longtable,multirow,lscape}
\usepackage{enumerate, color, url}
\usepackage{dsfont,ctable}
\usepackage{booktabs}
\usepackage[margin = 0.75in]{geometry}
\usepackage{setspace}
\usepackage{algorithm}

\newtheorem{proposition}{Proposition}



\newcommand{\bt}{\mathbf{t}}

\newcommand{\bR}{\mathbf{R}}

\newcommand{\cG}{\mathcal{G}}


\newcommand{\bmu}{\boldsymbol{\mu}}




\newcommand{\bone}{\boldsymbol{1}}


\newcommand{\argmin}{\operatornamewithlimits{argmin}}

\begin{document}

\title{A backward procedure for change-point detection with applications to copy number variation detection}
\author{Seung Jun Shin, Yichao Wu, and Ning Hao \smallskip \\ 
\it Korea University, University of Illinois at Chicago, and The University of Arizona}
\date{}

\maketitle

\abstract{
Change-point detection regains much attention recently for analyzing array or sequencing data for copy number variation (CNV) detection. In such applications, the true signals are typically very short and buried in the long data sequence, which makes it challenging to identify the variations efficiently and accurately. In this article, we propose a new change-point detection method, a backward procedure, which is not only fast and simple enough to exploit high-dimensional data but also performs very well for detecting short signals. Although motivated by CNV detection, the backward procedure is generally applicable to assorted change-point problems that arise in a variety of scientific applications. It is illustrated by both simulated and real CNV data that the backward detection has clear advantages over other competing methods especially when the true signal is short. \smallskip \\ 
\textsc{Keywords}: 
Backward detection; 
copy number variation; 
mean change-point model; 
multiple change points;
Short signal.}

\section{Introduction}
As a classic topic, change-point detection regains much attention recently in the context of uncovering structural change in big data. In particular, a normal mean change-point model and its variants have been applied to analyze high-throughput data for DNA copy number variation (CNV) detection. The CNV is defined as duplication or deletion of a segment of DNA sequences compared to the reference genome, and can cause significant effects at molecular levels and be associated with susceptibility (or resistance) to disease \citep{feuk2006structural,freeman2006copy,mccarroll2007copy}.

There are multiple sources of data that provide us with copy number information.  The microarray Comparative Genomic Hybridization (aCGH) techniques have been widely used for CNV detection \citep{urban2006high}. The aCGH is helpful to detect long CNV segments with tens of kilobases (kb) or more, but is not able to locate small-scale CNVs with length shorter than its minimal resolution ($>$10 kb), which are common in the human genome \citep{sebat2004large,carter2007methods,wong2007comprehensive}. In addition to the aCGH approach, the single nucleotide polymorphism (SNP) genotypeing array has become an alternative in CNV detection due to its improved resolution \citep{citeulike:5084099}. For example, popular SNP array platforms such as Illumina \citep{Peifferetal}  and Affymetrix \citep{McCarrolletal} allow to detect CNVs with kilobase-resolution.
In SNP arrays, CNV information is measured by the total fluorescent intensity signal ratios from both alleles at each SNP locus referred to as log-R-ratio (LRR). It also allows to have the relative ratio of the fluorescent signals between two alleles, known as B allele frequency (BAF). Finally, in very recent applications, aligned DNA sequencing data with even higher resolution can be directly used for CNV detection. The next generation sequencing (NGS) techniques typically  produce a millions of short reads that are to be aligned with the reference genome. Both of the associated read depth (RD) and distances of paired-end (DPE) from the aligned sequence are important sources of inferring CNV \citep{medvedev2009computational,abyzov2011cnvnator,duan2013comparative,chen2017allele}. Note that there is a trade-off between the resolution and data size. With higher resolution data, it is possible to discover shorter CNVs; at the same time, the larger data size  brings great challenge in computation. This might be one of the reasons why the SNP array has been most popular in recent CNV studies since aCGH data have low resolutions and RD or DPE data from NGS are too large to be handled directly. However, as related computing technologies advance, the NGS data are getting more attentions in the recent applications. Finally, we summarize popular data sources for CNV detection in Table \ref{tb::data}.

\begin{table}[h]
\caption{Data types for CNV information and the corresponding resolutions.} \label{tb::data}\begin{center}
\begin{tabular} {llc} \hline
\multicolumn{1}{c}{Data} & \multicolumn{1}{c}{Measurement} & \multicolumn{1}{c}{Resolution} \\ \hline
array CGH & Fluorescence intensity ratio & $10 \sim 100 \times$kb \\
SNP array & Fluorescence intensity ratio/B-allele frequency        & kb \\
NGS  & Read depth/Distance of paired end & b \\ \hline
\end{tabular}
\end{center}
\end{table}

There have been tons of methods developed for CNV detection. Different approaches are applied to different types of data sources. As one of the most popular approaches, the CNV detection problem can be regarded as an application of the change-point model which has been actively studied in statistics. For example, both of the LRR from SNP array and $\log_2$ ratio from aCGH has a mean value of zero for normal copy number, while negative (resp. positive) for the deletion (resp. duplication). Similar ideas can also be employed for the RD data in the sense that one may observe less (resp. more) read counts in a region with deleted (resp. duplicated) copy number. In all the examples, the data structure changes at CNVs. Naturally, one may infer CNVs by checking the subregions where the LRR or read-depth is significantly different from the mean of rest.

The change-point model has a long history that traces back to the 1950s. See \cite{Page:1955,Page:1957,ChernoffZacks:1964,Gardner:1969} and \cite{SenSri:1975} for the early developments of change-point model with at most one change point. The data size considered in those papers is also small. However, new applications call for more flexible models capable of detecting multiple change points scattered along a huge sequence. Recent developments include circular binary segmentation \citep[CBS,][]{OVLW:2004,VO:2007}, $\ell_1$ penalization \citep{HuangEtal:2005,fusedlasso:2008,ZhangLange:2010} and total-variation-penalized estimation \citep[TVP,][]{harchaoui2010multiple}, fragment assembling algorithm \citep[FASeg,][]{yu2007forward}, screening and ranking algorithm \citep[SaRa,][]{NiuZhang:2010,HaoNiuZhang:2013}, likelihood ratio selection \citep[LRS,][]{CaiJengLi:2010}, simultaneous multiscale change point estimator \citep[SMUCE,][]{frick2014multiscale}, and wild binary segmentation \citep[WBS,][]{fryzlewicz2013wild} among  many others. Hidden Markov model is another popular approach for CNV detection \citep{Fridlyand:HMM:2004,Wangetal:07,szatkiewicz2013improving}. Yet it relies on some application specific assumptions valid only for certain copy number data. \cite{Zhang:2010} provided a comprehensive overview on CNV detection as an application of the change-point model. \cite{roy2013evaluation} compared the performance of a few recent CNV detection methods under various scenarios.

As an application of the change-point model, inferring CNV is regarded as a very challenging problem since the CNVs subregions are usually very short and hidden in a very long sequence. The size of detectable CNVs typically range from 1,000 bps to megabases. \cite{international2010integrating} show that the average size of total CNVs in the individual genome is $3.5 \pm 0.5$ Mbp (0.1\%). As an illustration, we analyze the SNP array data collected from the Autism Genetics Resource Exchange \citep[AGRE,][]{bucan2009genome} which contain three parallel LRR sequences  of father-mother-offspring trio. Panel (a) of Figure \ref{fg::illustration} depicts the LRR sequence of mother's whole genome and clearly illustrates that it is impossible to pick CNV out by eyeballs. Panels (b) and (c) show a zoom-in plot of one of the detected CNVs from mother's sequence and a histogram of sizes (in terms of the number of biomarkers) of CNVs which are commonly detected by several different methods considered in this paper, respectively. The size of the entire LRR sequence in (a) is 561,466, while one of the depicted CNV in (b) has size of 6. We would like to remark that the most of CNVs are very short (as shown in (c)) and hidden in a long and noisy sequence, which makes the problem non-trivial. We will revisit this data in Section \ref{s:real} where a complete analysis is illustrated.

\begin{figure}[h]
\begin{center}
\subfigure[Mother's LRR Sequence]{
\includegraphics[width = 0.31\textwidth]{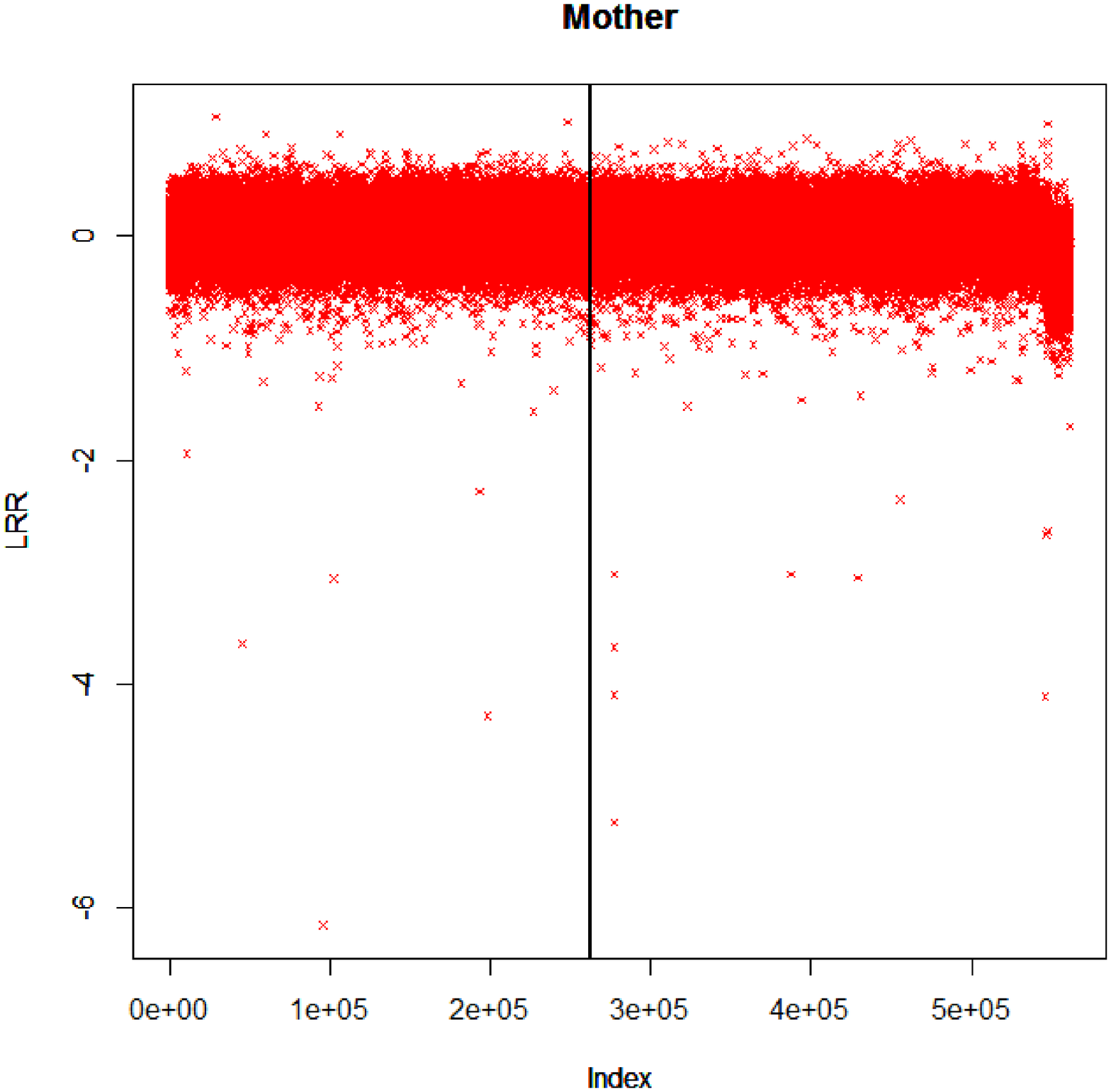}}
\subfigure[A zoom-in plot of one CNV from mother]{
\includegraphics[width = 0.31\textwidth]{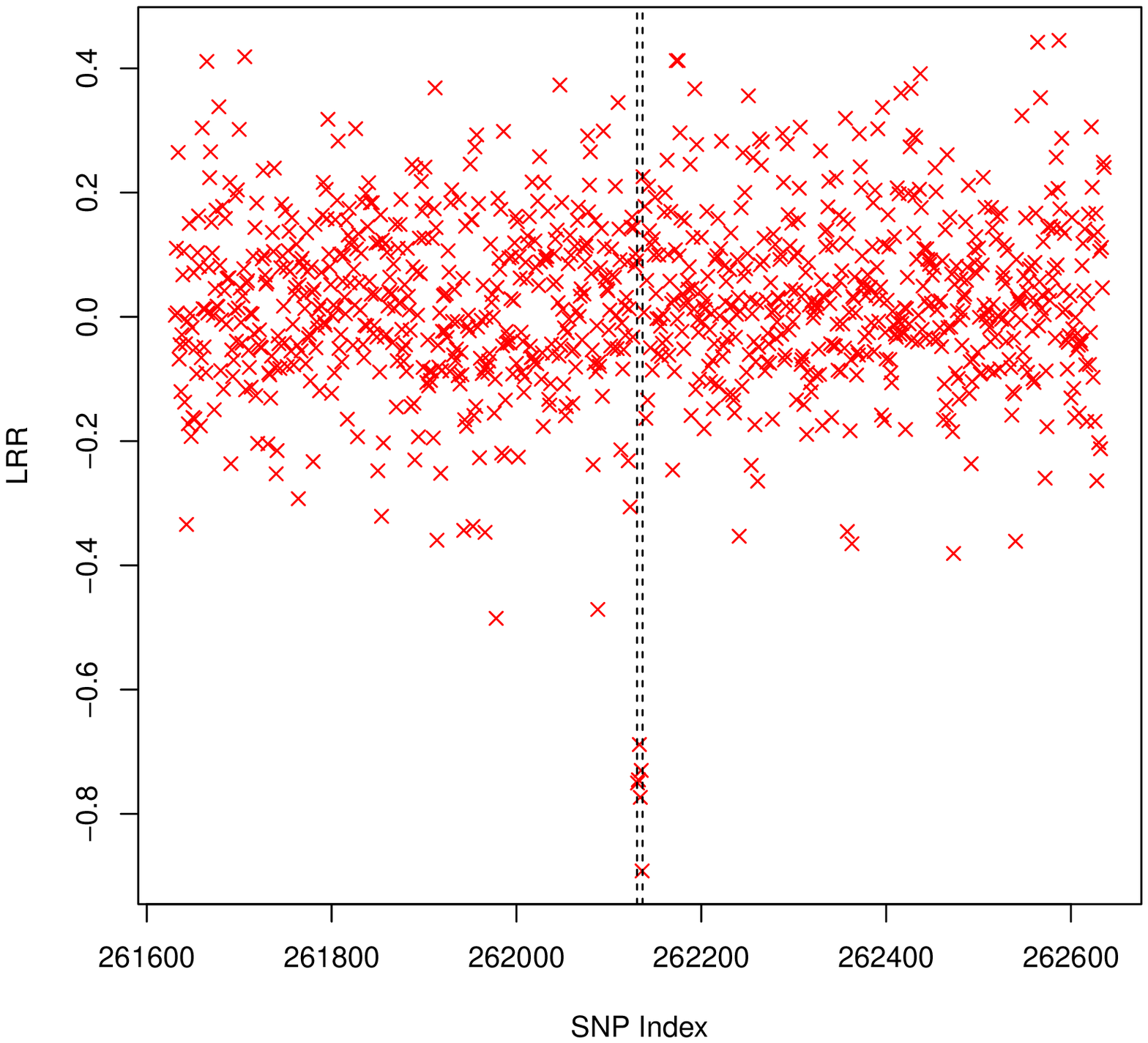}}
\subfigure[Sizes of CNVs detected]{
\includegraphics[width = 0.31\textwidth]{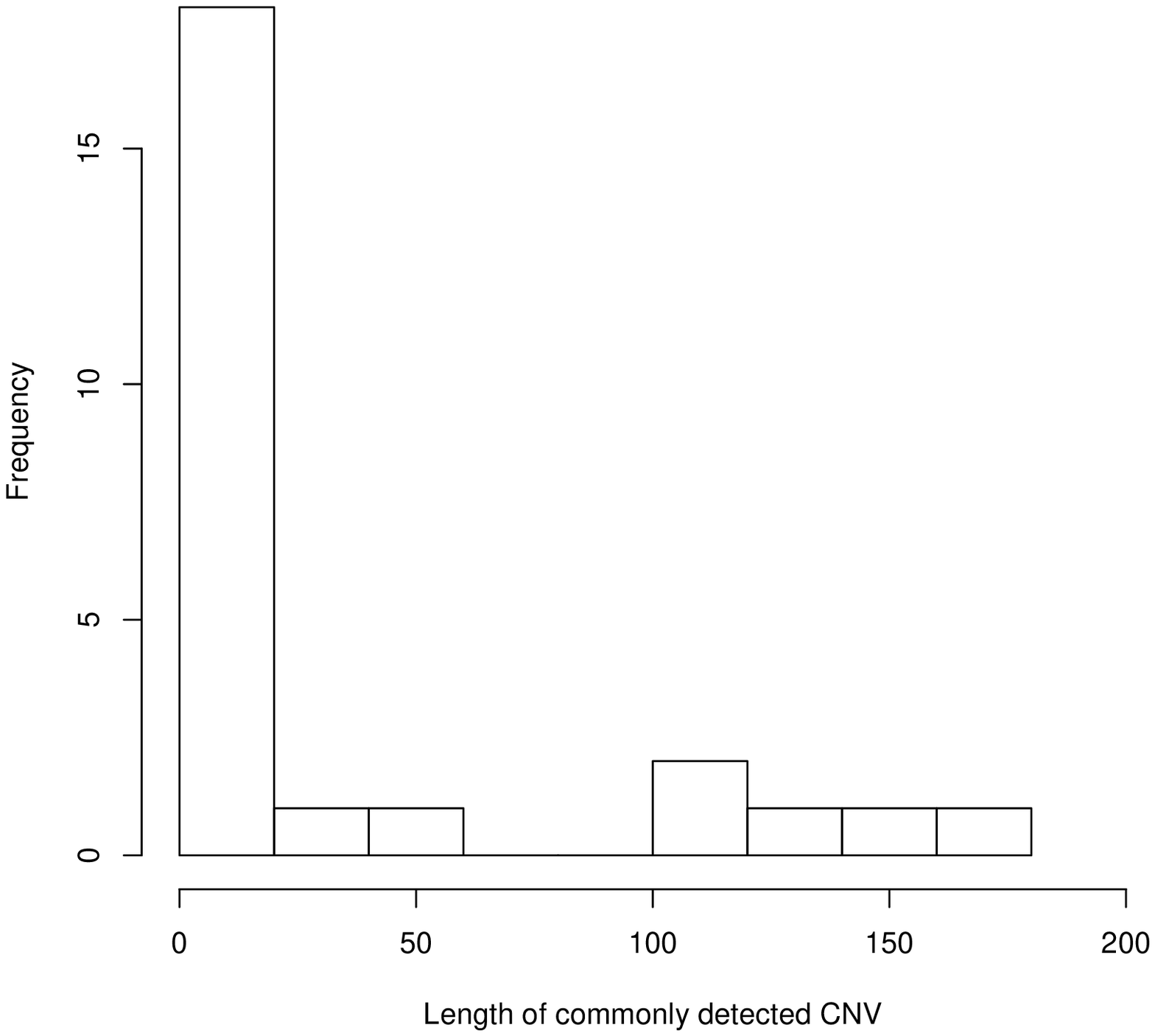}}
\end{center}
\caption{The subpanel (a) shows the whole sequence of the mother's LRR; (b) depicts one of them from the mother's LRR sequence of the AGRE trio SNP array data; and (c) is a histogram of length of CNVs commonly detected by several methods. } \label{fg::illustration}
\end{figure} 

In this context, a desirable  CNV detection method should be not only accurate enough to detect such short CNVs but also computationally fast enough to get estimates within a practically manageable time limit even for very long data sequence. Toward this, we propose a new change-point detection method called backward detection whose name comes from the backward variable selection in the linear regression. The backward detection is computationally efficient, with a complexity $O(n\log n)$ to analyze a sequence with length $n$. Moreover, it performs very well to pick out change-points which are very closely located. Therefore, it is an ideal tool for CNV detection. The idea of the backward detection is conceptually similar to the Wald's agglomerative clustering \citep{wald1963}, but different in that the location information of the sequence data is naturally employed. We also note that our method can be viewed as a ``bottom-up'' strategy for the change-point detection which has not studied extensively in the literature compared to ``top-down'' ones (such as binary segmentation) mainly due to its computational intensity. Recently, \cite{fryzlewicz2018} proposed an efficient ``bottom-up'' method for the general multiple change-point detection problem by using what they call the tail-greedy unbalanced Haar (TGUH) transformation. Yet, our numerical simulation shows that it suffers from detecting short and sparse signals which is common in CNV detection.

The rest of article is organized as follows. In Section \ref{s:model}, a normal mean change-point model and several popular detection strategies are introduced. In Section \ref{s:backward}, the backward detection is proposed in great details. A stopping rule for the backward detection is developed in Section \ref{s:cutoff}. Numerical performance of the proposed method is evaluated in Section \ref{s:sim}, and illustrations to both log-R ratio data from SNP array and the read depths from the aligned sequence data are given in Section \ref{s:real}. Final discussion follows in Section \ref{s:discussion}.

\section{Change-point Model} \label{s:model}
A normal mean change-point model assumes
\begin{align}\label{model1}
Y_i=\mu_i+\epsilon_i,\qquad i=1, 2, \cdots, n
\end{align}
with $\epsilon_i$ being iid $N(0, \sigma^2)$. The means $\mu_i$ are assumed to be piecewise constant with $K$ change points at $\bt = (t_1, t_2, \cdots, t_K)^T$. Denote $t_0=0$ and $t_{K+1}=n$ by convenience. It means that $\mu_i=\mu_j$ for any $i,j\in\{t_k+1, \cdots, t_{k+1}\}$, $0\leq k\leq K$, and $\mu_{t_k}\ne\mu_{t_k+1}$ for $1\leq k\leq K$. Yet the number of change points $K$ is typically unknown. The goal of change-point detection is to estimate both the number $K$ and location vector of change points $\bt$. Thus CNV detection can be regarded as a direct application of the change-point model (\ref{model1}). However, the CNVs are often very short and buried in a very long data sequence, which makes the problem even more challenging due to the high-dimensionality of $\bmu = (\mu_1, \cdots, \mu_n)^T$. The normal mean change-point model (\ref{model1}) is often reasonable in CNV application due to the random noise during the experiments \citep{barnes2008robust}.
A certain transformation can be considered otherwise. For example, the raw RD data are  discrete and spatially correlated due to the complicated sequencing process, and hence the normal error assumption is not proper. Nevertheless, the local median transformation can be used to ensure its normality \citep{tony2012robust}.

Suppose that the number of change points, $K$, is known. Then the change-point detection problem can be formulated in terms of minimizing the sum of squared errors (SSE). For a set of numbers $Y_i$ with index $i$ in a set $\cG$, we define their SSE  by $SSE(\{Y_i, i\in \cG\})=\sum_{i\in\cG}(Y_i-\bar Y_{\cG})^2$ where $\bar Y_{\cG}=\frac{1}{|\cG|}\sum_{i\in \cG} Y_i$ and $|\cG|$ denotes the cardinality of the index set $\cG$. Then change-point detection is to estimate $\bt$ by solving the following optimization problem
\begin{eqnarray}
\min_{\bt}&&\sum_{j=0}^{K} SSE\left(\left\{Y_i:  t_{j}+1\leq i\leq t_{j+1}\right\}\right),\label{obj:sse}\\
\mbox{subject to}&& 0=t_0<t_1<t_2< \cdots < t_K<t_{K+1}=n.\nonumber
\end{eqnarray}

Note that (\ref{obj:sse}) is inherently a combinatorial problem and very challenging for large $n$.  The total number of different combinations is $\frac{n!}{K!(n-K)!}\geq \left(\frac{n}{K}\right)^K$ which can be huge especially as in the application of CNV. This makes it difficult to detect change points by solving (\ref{obj:sse}) directly not to mention the fact that $K$ is typically not known. When $n$ is small and $K$ is bounded, the exhaustive search method has been studied by \cite{Yao1988} and \cite{Yao:1989}. They showed that an exhaustive search with BIC is consistent to estimate $K$ and $\bt$.
In order to improve computational efficiency, dynamic programming can be applied to solve (\ref{obj:sse}) with complexity of $O(n^2)$ \citep{BBM:2000,jackson2005algorithm}. \cite{killick2012optimal} developed an efficient algorithm named PELT that solves the problem with linear cost $O(n)$, but requires additional assumptions which might not be practical in certain applications. In general, these methods have not been widely applied in the CNV applications.

We remark that the mean change-point model (\ref{model1}) can be equivalently reformulated as a linear regression model \citep{HuangEtal:2005,fusedlasso:2008} and the change-point detection problem is then viewed as a variable selection one. Motivated by backward elimination methods in variable selection, we propose a stepwise procedure called backward detection (BWD) to solve the change-point detection problem. Some stepwise methods in the context of change-point detection have been explored. For example, a classical binary segmentation \citep[BS,][]{Vostrikova:1981} method applies single change-point detection tool recursively, identifying one change point each time, until that some stopping criterion is met. We refer BS as a forward detection method because it starts with a null model with no change point and sequentially detects change points, which is analogous to the forward variable selection in the regression context. In spite of its simplicity, as pointed out by \cite{OVLW:2004}, the forward detection is not able to to detect short segments buried in a long sequence of observations, which limits its utilization in certain applications such as CNV detection. The CBS \citep{OVLW:2004,VO:2007} modifies the BS by identifying two change points simultaneously and gained great popularity in the CNV detection. However, we observe from limited numerical studies that the CBS is still unsatisfactory when the true segment (i.e., CNV) is very short due to its nature from the forward detection. Moreover, CBS has higher computational complexity than the BS, which brings additional burden dealing with big data.

\section{Method} \label{s:backward}

\subsection{Why Not Forward Detection?}
In what follows, we elaborate why the forward detection may fail to detect short signals, which provides a clear motivation of the proposed BWD in CNV detection. The forward detection starts with no change point and try to detect the very first one by solving the following optimization problem
\begin{eqnarray} \label{eq::forward}
\min_{s_1}&&\sum_{j=0}^{1} SSE(\{Y_i, i\in\{s_{j}+1, \cdots, s_{j+1}\}\}),\label{obj:sse:fwd1}\\
\mbox{subject to}&& 0=s_0<s_1<s_2=n.\nonumber
\end{eqnarray}
The optimizer $\hat s_1$ of (\ref{obj:sse:fwd1}) estimates one of the change points and divides the data into two parts $\{Y_i, i\in\{1, 2, \cdots, \hat s_1\}\}$ and $\{Y_i, i\in\{\hat s_1+1, \cdots, n\}\}$. We may apply (\ref{obj:sse:fwd1}) to each of these two parts to detect further change points and this can be continued until we have identified all change points.

Note that the total number of combinations for (\ref{obj:sse:fwd1}) is $n$ and thus the corresponding optimization is feasible. However, as aforementioned the performance of the forward detection may not be satisfactory in some situation. For example, if there are only two change points at $t_1$ and $t_2$ and $\mu_i=0$ if $i < t_1$ or $i \ge t_2$  and $\mu$ if $t_1 \le  i < t_2$, the forward detection does not work well if the length of signal $L = t_2 - t_1$ is small while both $t_1$ and $n-t_2$ are large. However, this type of challenging situation is very common in the CNV applications as shown in Figure \ref{fg::illustration}.

In order to illustrate drawbacks of the forward detection, we consider a simplified scenario in which the locations of the two potential change points $t_j,$ $j=1,2$ are known, but it is not clear whether the associated mean $\mu$ is actually changed (i.e., $\mu = 0$ or not). The following Proposition 1 formally states that the forward detection asymptotically fails even in this simple scenario unless $L$ is sufficiently large compared to $n$. 

\begin{proposition}{Proposition 1.}{}
Suppose $\lim_{n \to \infty} t_1/n = c \in (0, 1)$ and $L = t_2 - t_1 = O(n^{\beta})$ for some $\beta \in [0, 1]$. If $\beta < 1/2$ then the forward detection fails as $n \to \infty$ for an arbitrary pair of $(\mu, \sigma^2)$ given.
\end{proposition}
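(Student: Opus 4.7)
My plan is to reformulate the forward detection as a CUSUM maximization and then show that the signal carried by the true change points is asymptotically drowned by the stochastic fluctuation of the CUSUM process.

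First I would reduce (\ref{eq::forward}) to the equivalent maximization
\[
\hat s_1 = \argmax_{1 \le s \le n-1} |U_s|, \qquad U_s = \sqrt{s(n-s)/n}\,\bigl(\bar Y_{1:s} - \bar Y_{(s+1):n}\bigr),
\]
and decompose $U_s = m_n(s) + \sigma W_s$, where $m_n(s) := \E U_s$ and $W_s$ is a mean-zero Gaussian process with $\Var(W_s) = 1$ at every $s$. A direct calculation with $\mu_i = \mu \cdot \mathbf{1}\{t_1 \le i < t_2\}$ shows that the numerator of $m_n(s)$ is piecewise linear in $s$ and vanishes at $s = 0, n$, with kinks at $s = t_1, t_2$. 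Consequently $|m_n(s)|$ attains its extrema exactly at $s = t_1$ and $s = t_2$, and using $t_1/n \to c \in (0,1)$ together with $L = O(n^\beta)$ this yields
\[
\max_{1 \le s \le n-1}|m_n(s)| \;=\; |m_n(t_1)| \vee |m_n(t_2)| \;\asymp\; |\mu| L / \sqrt{n} \;=\; O\bigl(n^{\beta - 1/2}\bigr),
\]
which is $o(1)$ whenever $\beta < 1/2$. In particular, since each $W_{t_j}$ is a single standard normal we obtain $|U_{t_j}| = O_P(1)$ for $j = 1,2$.

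The next step is to show $\max_{1 \le s \le n-1}|U_s| \toP \infty$. By the uniform bound on $m_n$ just obtained, it suffices to prove the analogous divergence for $\max_s|W_s|$. The cleanest route is the Darling--Erd\H{o}s limit theorem for studentized partial sums, which in fact delivers the sharp rate $\sqrt{2\log\log n}$. A self-contained alternative is weak convergence of $\{U_{\lfloor nt \rfloor}\}$ on compact subsets of $(0,1)$ to the standardized Brownian bridge $\sigma B(t)/\sqrt{t(1-t)}$, combined with the law of the iterated logarithm, which forces these limiting paths to blow up as $t \downarrow 0$ or $t \uparrow 1$. Either way, combining this divergence with the $O_P(1)$ bound at $s = t_1, t_2$ yields $\rP(\hat s_1 \in \{t_1, t_2\}) \to 0$, which is the intended sense in which the forward detection fails.

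The main obstacle is the last step: because $(W_s)$ is a highly correlated Gaussian process with unit marginal variance, crude maxima-of-independent-Gaussians bounds are unhelpful, and the divergence is driven by the behavior of the normalization $\sqrt{s(n-s)/n}$ near $s \approx 0$ and $s \approx n$. Quoting Darling--Erd\H{o}s sidesteps this cleanly; a fully self-contained version requires either the Brownian-bridge weak limit together with LIL near the endpoints, or a direct peeling/truncation argument exploiting that correlations between $W_{s_k}$ at well-separated boundary grid points stay bounded away from one.
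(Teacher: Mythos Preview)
Your argument is technically sound, but it proves a different---and harder---statement than the paper does. The paragraph immediately preceding Proposition~1 restricts attention to a \emph{simplified scenario}: the candidate locations $t_1,t_2$ are taken as known, and ``forward detection'' is interpreted as the two-sample test of $H_0:\mu=0$ based on $\bar D_{n,t_j}=\bar Y_{t_j}-\bar Y_{n-t_j}$ at each $t_j$. Under this reading the proof collapses to a one-line power calculation: $\bar D_{n,t_1}$ has mean of order $-L\mu/(n-t_1)$ and standard error $\sigma\sqrt{1/t_1+1/(n-t_1)}\asymp n^{-1/2}$, so the noncentrality parameter is $O(L/\sqrt{n})=O(n^{\beta-1/2})\to 0$ and the asymptotic power equals the nominal level~$\alpha$. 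No maximization over $s$, and hence no Darling--Erd\H{o}s argument, is needed.

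Your route instead treats the full forward step with unknown location, and your conclusion $\rP(\hat s_1\in\{t_1,t_2\})\to 0$ is a genuinely stronger assertion about the actual BS algorithm rather than an oracle version of it. The signal bound $\max_s|m_n(s)|=O(L/\sqrt{n})$ is correct, and $\max_s|W_s|\toP\infty$ does follow from Darling--Erd\H{o}s, with growth at the slow rate $\sqrt{2\log\log n}$ driven by indices $s$ near the boundary. What you pay for this is nontrivial extreme-value machinery that the paper deliberately sidesteps by passing to the known-location oracle. Both approaches ultimately rest on the same arithmetic fact---the standardized signal at $t_1,t_2$ is of order $L/\sqrt{n}$---so your CUSUM decomposition recovers the paper's computation as the special case $s=t_j$, with the remainder of your argument buying extra generality the proposition as stated does not actually demand.
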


\begin{proof}{Proof}
At the first step of the forward detection, it declares that the mean-change occurs at $t_j, j = 1, 2$ if $|\bar D_{n,t_j}| = | \bar Y_{t_j} - \bar Y_{n-t_j} |$ is significantly large enough. Here $\bar Y_t = \sum_{i=1}^{t}Y_i / t$ and $\bar Y_{n-t} = \sum_{i=t+1}^{n} Y_i / (n-t)$, for a given $t  \in \{1, \cdots, n-1\}$.

To test for $t_1$, the sampling distribution of $\bar D_{n,t_1}$ for a given $\mu$ is
\begin{align*}
\frac{\bar D_{n,t_1} + L\mu/(n-t_1)}{\hat \sigma_n \sqrt{\frac{1}{t_1} + \frac{1}{n-t_1}}} \quad \stackrel{\mathcal{D}}{\to} \quad N \left(0 ,1 \right),
\end{align*}
where $\hat \sigma_n^2$ denotes a consistent estimator of unknown $\sigma^2$. Then it can be shown that the associate asymptotic power converges to the nominal level $\alpha$ for any given pair of $(\mu, \sigma^2)$ if $\lim_{n \to \infty} n^{-1/2}L = 0$. Similar result can be shown for $t_2$  as well, which completes proof.
\end{proof}

Proposition 1 provides a necessary condition of the forward detection in terms of the relative length of the true signal length $L$ as a function of sample size $n$. The order of $L$, $\beta$, should be larger than $1/2$ for the original change-point model where the change points $t_1$ and $t_2$ are unknown. Recently \cite{fryzlewicz2013wild} showed that the forward selection is consistent to recover true change points when $\beta$ is larger than $3/4$.

\subsection{Backward Detection}
Contrary to the forward detection, the BWD starts from the opposite extreme that every single position is assumed to be a change point. Namely we begin with $n$ groups corresponding to these $n-1$ change points and each group contains only one observation. We introduce notations $\mathbb{G}=\{\cG_1, \cG_2, \cdots, \cG_n\}$ with $\cG_i=\{i\}$.

The BWD works by repeatedly merging two neighboring groups into one. Note that the merging of two neighboring groups will increase the total sum of squared errors.  For any two neighboring groups, we use the rise in the SSE to quantify the potential of merging them together. At each merging step, we choose to merge two neighboring groups with the smallest rise of SSE.   Namely we define
\begin{align} \label{eq::R}
R_{i}=SSE(\{Y_j, j\in \cG_i\cup \cG_{i+1}\})-SSE(\{Y_j, j\in \cG_i\})-SSE(\{Y_j, j\in  \cG_{i+1}\}),
\end{align}
where $SSE(\{Y_j, j\in  \cG\})$ denotes the sum of squared errors for all observations with indices in $\cG$.

At the beginning of iteration $m=0, 1, \cdots, n-2$, there are  $n-m$ groups. Denote the current groups by $\mathbb{G}^{(m)}=\{\cG_1^{(m)}, \cG_2^{(m)}, \cdots, \cG_{n-m}^{(m)}\}$ and the corresponding potential of merging two neighboring groups by $\{R_{1}^{(m)},R_{2}^{(m)},\cdots, R_{n-m-1}^{(m)}\}$. The superscript is used to represent the $m$th iteration.
Identify $j=\displaystyle\argmin_{i=1, 2, \cdots, n-m-1 } R_{i}^{(m)}$. Then we merge groups $\cG_j^{(m)}$ and $\cG_{j+1}^{(m)}$ into a new group. Updated grouping is denoted by $\mathbb{G}^{(m+1)}=\{\cG_1^{(m)}, \cG_2^{(m)}, \cdots, \cG_{j-1}^{(m)},  \cG_{j}^{(m)}\cup \cG_{j+1}^{(m)}, \cG_{j+2}^{(m)}, \cdots, \cG_{n-m}^{(m)} \}$ and potentials of merging is updated by $\{R_{1}^{(m)},R_{2}^{(m)},\cdots, R_{j-2}^{(m)}, R_{j-}^{(m)}, R_{j+}^{(m)}, R_{j+2}^{(m)}, \cdots, R_{n-m-1}^{(m)}\}$, where
\begin{align*}
R_{j-}^{(m)} = SSE(\{Y_j, j \in \cG_{j-1}^{(m)} & \cup \cG_{j}^{(m)} \cup \cG_{j+1}^{(m)}\}) - \\
    & SSE(\{Y_j, j \in \cG_{j-1}^{(m)}\})-SSE(\{Y_j, j\in  \cG_{j}^{(m)}\cup \cG_{j+1}^{(m)}\}), \mbox{ and} \\
R_{j+}^{(m)} = SSE(\{Y_j, j \in \cG_{j}^{(m)} & \cup \cG_{j+1}^{(m)} \cup \cG_{j+2}^{(m)}\}) - \\
    & SSE(\{Y_j, j\in  \cG_{j}^{(m)}\cup \cG_{j+1}^{(m)}\})-SSE(\{Y_j, j\in \cG_{j+2}^{(m)}\}).
\end{align*}
Now, the steps described above are repeatedly applied until a desire stoping rule is satisfied. The associated stoping rule is discussed in the following section. If not terminated, only one group will be survived at the end of iteration $n-2$.

Despite their structural similarity, the BWD is substantially different from the froward detection, since the null and alternative hypotheses at each step are reverted. At each step, the BWD tests the equivalence between the two group means while the forward tests their difference. Therefore, the BWD tends to stay with more groups with smaller sizes unless there is strong evidence to merge some of them and hence is more powerful to detect short signals buried on a long sequence. We also remark that the BWD starts with solving a series of local problems each of which focuses on finding structural changes in a small part of the data and eventually towards to a single global problem at the end that employs the entire sequence. On the other hand, the forward detection starts as a global method and divided it into several local problems. This is one of the reasons why BWD is preferred to identify short signals in lengthy noise sequences, where the local methods are known to outperform the global methods in general.

Finally, the BWD algorithm can be summarized as follows.
\begin{itemize}
\item [] \verb"Input": $Y_1, \cdots, Y_n$. \par
\begin{itemize}
\item [1.] Initialize $\mathbb{G}^{(1)} = \{\{1\}, \cdots, \{n\}\}$ and $\bR^{(1)} = \{R_1^{(1)}, \cdots, R_{n-1}^{(1)}\}$ from (\ref{eq::R}).
\item [2.]At the $m$th iteration, $m = 1, \cdots, n-1$:
    \begin{itemize}
    \item [2--1] Obtain $j = \argmin_i R_i^{(m)}$.
    \item [2--2] Break the loop if $R_j^{(m)}$ is larger than a prespecified cutoff, and go to the next step otherwise.
    \item [2--3] Update
    \begin{align*}
    & \mathbb{G}^{(m+1)} =\{\cG_{1}^{(m)}, \cdots, \cG_{j-1}^{(m)},\cG_{j}^{(m)}\cup\cG_{j+1}^{(m)},\cG_{j+2}^{(m)},\cdots,\cG_{n-m}^{(m)}\},\\
    & \bR^{(m+1)}  = \{R_{1}^{(m)},R_{2}^{(m)},\cdots, R_{j-2}^{(m)}, R_{j-}^{(m)}, R_{j+}^{(m)}, R_{j+2}^{(m)}, \cdots, R_{n-m-1}^{(m)}\}, \\
    & K = n - m -1.
    \end{align*}
\end{itemize}
\end{itemize}
\item [] \verb"Output": $\mathbb{G}^{(K)}.$
\end{itemize}

\subsection{Modification for epidemic change-points}
In CNV analysis, most parts of sequence (normal) have a known baseline mean, say $\mu_0$, and a mean-change away from $\mu_0$ (variant) is followed by a change back to $\mu_0$. This is often referred to as the epidemic change-points \citep{yao1993tests} and an important feature of CNV analysis. In order to to take into account such pairing structure, we modify the algorithm by adding the following Step between Step 2--2 and 2--3.

\begin{itemize}
\item If the sample average of observations in the merged sets, $\bar Y_{\cG_{j}^{(m)}\cup\cG_{j+1}^{(m)}}$ is not significantly different from the baseline mean $\mu_0$, i.e., $ \sqrt{v}{\left| \bar Y_{\cG_{j}^{(m)}\cup\cG_{j+1}^{(m)}} - \mu_0 \right|}/\hat \sigma_n > z_\alpha$ where $z_\alpha$ is the upper $\alpha$th quantile of standard normal random variable and $v = \left|\cG_{j}^{(m)}\cup\cG_{j+1}^{(m)}\right|$, then update $R_{j-}^{(m)}$ and $R_{j+}^{(m)}$ based on $\mu_0$ instead of $\bar Y_{\cG_{j}^{(m)}\cup\cG_{j+1}^{(m)}}$.
\end{itemize}

Finally, we develop \texttt{bwd} R-package that implements the proposed algorithm, and available on CRAN.

\subsection{Computational Complexity}
Computational efficiency is of  practical interest in CNV applications due to their inherent  high-dimensionality. At each iteration in the BWD, the most computationally intensive part is to find $j = \argmin_i R_i^{(m)}$ which takes $O(n)$ at the worst. This gives the total complexity of $O(n^2)$, which is too slow especially when $n$ is very large.

However, it is realized that finding maximum and corresponding index is straight forward once $\bR^{(1)}$ is ordered, which takes $O(n \log n)$ computations. Notice that the sorting step is required onetime at the initial stage. Once it is sorted, it take $O(1)$ to find the maximizer index at the $m$th iteration while we need additional effort to update $\bR^{(m+1)}$ in an ordered fashion. However, such an update takes only $O(\log n)$ computations. In particular, we borrow the idea from the bi-section method, a well-known root finding algorithm. We can first compare $R_{j+}^{(m)}$ (or $R_{j-}^{(m)}$) to the median of the values in $\bR^{(m)}$. Compare it the 75\%th percentile if it is greater than the median and 25\%th percentile otherwise. We continue this until finding its exact location. Finally, the total computational complexity of the BWD is then reduced to $O(n \log n)$.

\section{Stopping Rule}\label{s:cutoff}

In every step of the BWD, two small groups are merged into a bigger group and we want to test whether this merging removes a real change point. In such a standard case, it is natural to use $t$-statistic. Since the unknown variance is homogeneous across all the observations, a global estimate of the noise variance is used at every step. At the $m$th iteration the following statistics $S_{(m)}$ is used to determine when to stop the procedure where
\begin{equation} \label{eq::t.stat}
S_{(m)} = \frac{\left| \bar Y_j^{(m)}-\bar Y_{j+1}^{(m)} \right|}{\hat \sigma_n \sqrt{\big|\cG_j^{(m)}\big|^{-1} + \big|\cG_{j+1}^{(m)}\big|^{-1}}},
\end{equation}
and the backward procedure is terminated if $S_{(m)}$ is too large. Here $\hat \sigma_n^2$ denotes an estimate of the unknown noise variance based on all the observation. If the true signals are very short and sparse, the sample variance of $Y_i$ can also be used in practice as a simple alternative. The use of global estimate brings an additional saving in computation since $R_j = \hat \sigma_n^2 S_{(m)}^2$. We use $\hat \sigma_n^2 = n^{-1} \sum_{i=1}^{n} \big(Y_i - \bar Y_{i}^{(h)}\big)^2$ with $\bar Y_{i}^{(h)} = \sum_{j = i-h}^{i+h} Y_j/(2h+1)$ for a given window $h > 0$ in the upcoming analysis as used in \cite{NiuZhang:2010}. An alternative estimate is the median absolute deviation (MAD) estimator, as pointed out in \cite{CaiJengLi:2010}.

Similar to usual t-statistic, (\ref{eq::t.stat}) may cause a false alarm when both of the two groups are of small sizes. To avoid the possible false alarm caused by small group sizes, we can set $S_{(m)} = 0$ if both of the two consecutive segments are shorter than a minimum number $M$. The $M$ can be chosen to be, say, 3 or 5 depending on applications. Such modification is acceptable in CNV application since it is very unlikely that any of two CNVs are closely located.

Now, the question is how large the critical value should be  to attain a desired target level $\alpha$ where $\alpha$ denotes the familywise error rate (FER) of the proposed BWD.
We remark that $(1-\alpha/2)$th quantile of $t$-statistic with the associated degrees of freedom will fail since $S_{(m)}$ is \textbf{correlated other group means} via the maximizer index $j$.
We propose the following numerical procedure to select a cutoff value that controls FER being at most $\alpha$.
\begin{enumerate}
\item Repeat the steps (a) -- (c) below $B$ times: for each of the $b$th iteration, $b = 1, \cdots, B$,
\begin{enumerate}
\item [1-(a)] Randomly generate a sequence of size $n$ from the null distribution that there exists no change point.
\item [1-(b)] Apply the backward procedure until merging the whole sequence into one group.
\item [1-(c)] $u_b = \max_{m = 1, \cdots, n-1} S_{(m)}$.
\end{enumerate}
\item The ($1- \alpha$)th sample quantile of $u_1, \cdots, u_B$ would be the cutoff value which attains a given level $\alpha$.
\end{enumerate}
We remark that the cutoff value is chosen from the null distribution of $\max_{m = 1, \cdots, n-1} S_{(m)}$, \textbf{not $S_{(m)}$ (step 1-(c))}, thus $\alpha$ controls the FER. The very first step 1-(a) that simulates samples from the null distribution is crucial in the proposed numerical procedure. Toward this we consider two scenarios: i) normality is assumed to be true while a noise variance $\sigma^2$ is still unknown. ii) neither normality nor $\sigma^2$ are known. In the first scenario, we can generate samples form the standard normal distribution. Notice also that this can be easily extended to any distribution other than normal distribution, whatever it is known. In the second scenario when underlying distribution is not known, the null distribution can be obtained by randomly permuting or bootstrapping residuals $r_i = Y_i - \bar Y_{i}^{(h)}, i = 1, \cdots, n$.

\begin{figure}[!htbp]
\begin{center}
\subfigure[cutoff vs $n$]{
\includegraphics[width = 0.48\textwidth]{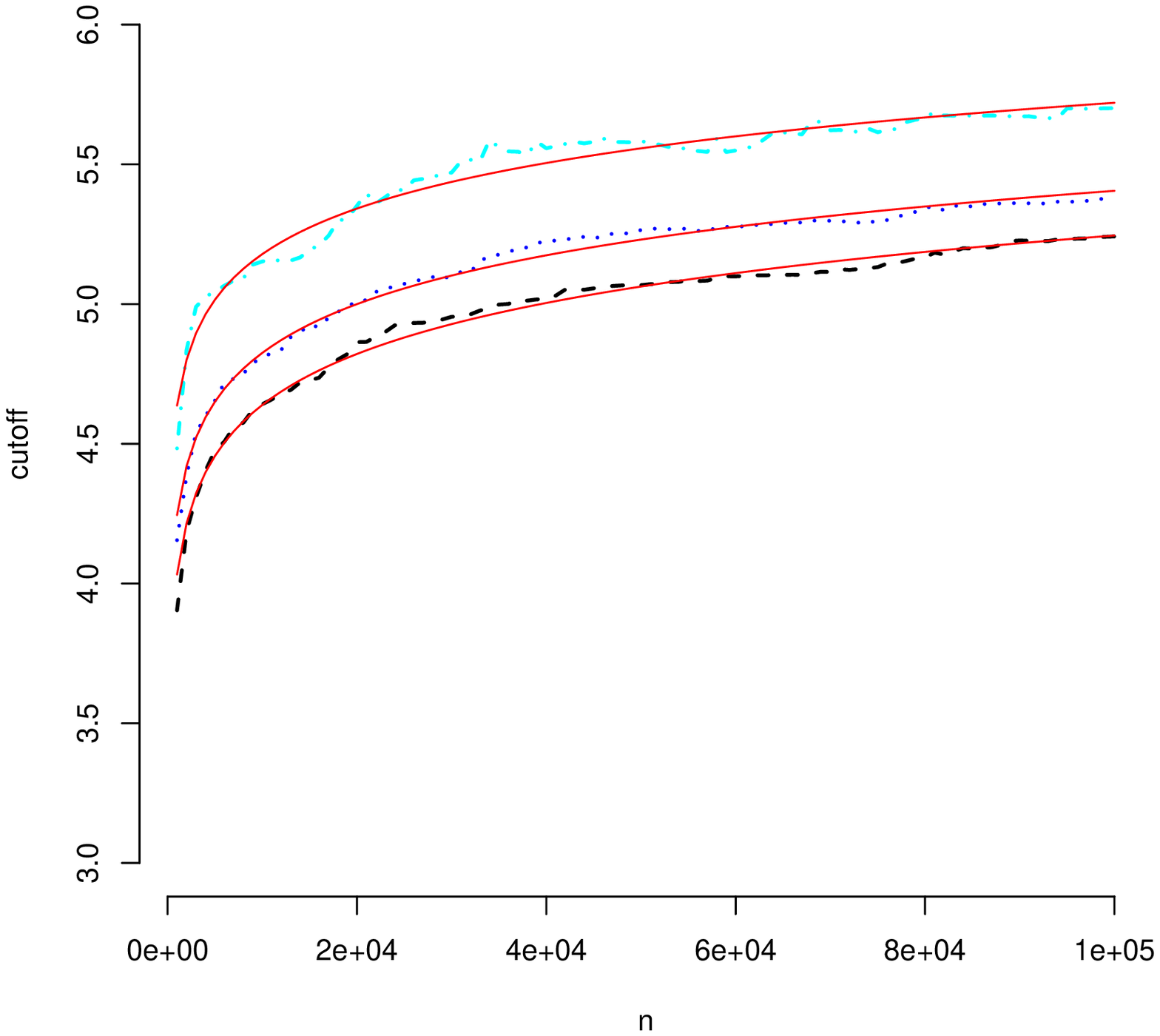}}
\subfigure[cutoff vs $\log n$]{
\includegraphics[width = 0.48\textwidth]{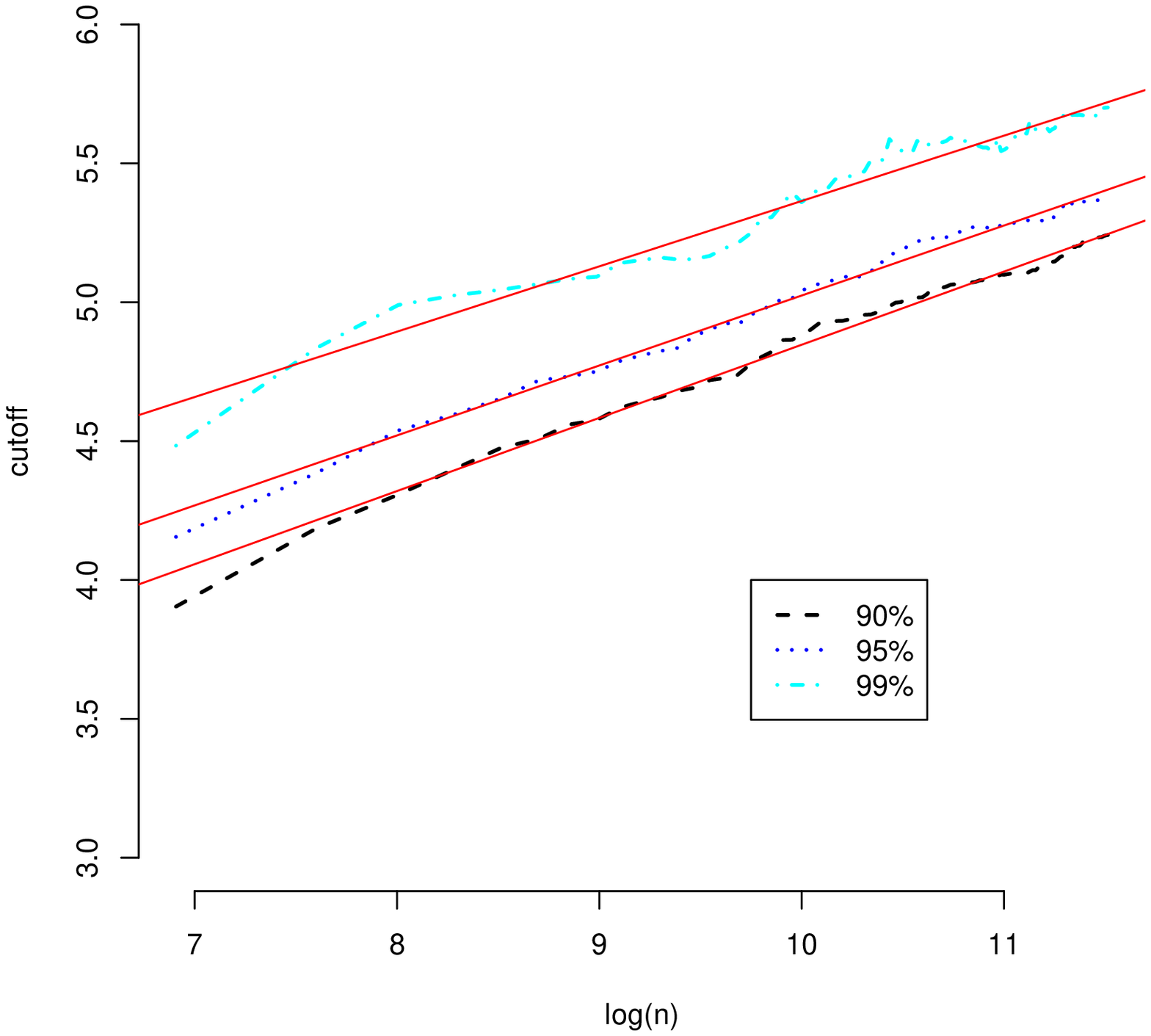}}
\end{center}
\caption{Log-linear relationship between estimated cutoffs ($\alpha = 0.01, 0.05, 0.10$) and sample size $n$ under normality assumption. The (red) solid lines in (a) and (b) are fitted regression lines of cutoffs on sample size $n$ and $\log n$, respectively. } \label{fg::cutoff}
\end{figure}

The proposed numerical procedure becomes computationally too intensive especially when sample size is very large, for instance over a million, which is not uncommon in CNV applications. Under normality assumption, we numerically investigate cutoff values for different $\alpha = 0.01, 0.05$, and $0.10$ as functions of sample size $n$. Figure \ref{fg::cutoff} depicts estimated cutoffs for different sample sizes from 1,000 to 100,000 by 1,000 and it shows clear log-linear relationship between the estimated cutoffs and the sample size $n$. Thus desired cutoffs for large $n$ can be approximated from the fitted regression line.

\section{Simulated Examples} \label{s:sim}

We evaluate the performance of the proposed backward procedure via numerical comparison against existing methods. The target levels considered are $\alpha = .01$ and $.05$. We consider both of the original BWD (BWD1) and the modified BWD (BWD2) for epidemic change-points under the assumption that the baseline mean $\mu_0$ is known. As described in Section \ref{s:cutoff}, there are two ways to obtain the cutoff values depending on how to simulate null samples. We can obtain a cutoff either from standard normal samples under normality assumption (cutoff1) or from the permuted residuals if normality assumption is not valid (cutoff2). We take the former in Section 5.1 with Gaussian error and the latter in Section 5.2 with non-Gaussian error.
We consider the CBS, WBS, LRS, TVP as competing methods. The CBS is one of the most widely used methods in the literature and shares similar principals of a typical stepwise approach with the proposed method. The WBS is a recent development based on binary segmentation (i.e., forward detection) that overcomes its shortcoming when detecting short signal. The LRS is carefully designed method for detecting sparse and short signals and known as optimal under some required model assumptions that include normality, and shortness and sparsity of the signals. In addition, we compare our method to a recently proposed bottom-up method, called TGUH \citep{fryzlewicz2018}. TGUH is designed from general change-point detection problem and our simulation show that it suffers from detecting short signals.

We consider the following mean change model
\begin{align*}
y_i = \sum_{k=1}^\kappa \delta_k \bone_{\{i \in I_k\}} + \epsilon_i
\end{align*}
where $\kappa$ denotes the number of signal segments (i.e., CNVs), $\delta_k, k = 1, \cdots, \kappa$, are unknown means of true signals, and $I_k \cap I_{k^{\prime}} = 0$ for any $k \neq k^\prime$.  We set $|I_k| = L$ and $\delta_k = \delta, k = 1, \cdots, \kappa$, and hence the strength of true signals are controlled by $L$ and $\delta/\sigma$. We consider two different noise distributions of $\epsilon$ including normal distribution and $t$-distribution with degrees of freedoms of $df$. We set $(n, L, \delta) = \{1000, 3000, 5000\} \times \{5, 10\} \times \{1.5, 2.0, 2.5\} $ with $\sigma = 1$ for normal model, and $(L, df) = \{5, 10\} \times \{10, 5\}$ with $n = 1000$ and $\delta = 3$ for $t$-distribution model. The number of true segments is given by $\kappa = n/1000$ and minimum distance between two true segments is set to 200. Numerical performance is evaluated based on 1,000 independent repetitions.

We claim that the signal segment $I_k$ is correctly detected by $\hat I_k$ if $I_k \cap \hat I_k \neq \phi$ and $|\hat I_k| < 2L$. In order to measure performance of the methods the following two measures are considered.
\begin{itemize}
\item [-] Sensitivity: (\# of correctly detected signals) / (\# of true signals, $\kappa$)
\item [-] Precision :  (\# of correctly detected signals) / (\# of detected signals)
\end{itemize}
Sensitivity relates to the ability to identify true signals and precision measures reliability of the detected signals. Notice that both measures lie between zero and one (by setting $0/0 = 0$) and a method is perfect if both measures have a value of one.

\subsection{Gaussian Error}

\begin{table}[t] \scriptsize
\begin{center}
\begin{tabular*}{\textwidth} {@{}c@{\extracolsep{\fill}}
                                 c@{\extracolsep{\fill}}
                                 c@{\extracolsep{\fill}}
                                 c@{\extracolsep{\fill}}
                                 r@{\extracolsep{\fill}}
                                 r@{\extracolsep{\fill}}
                                 r@{\extracolsep{\fill}}
                                 r@{\extracolsep{\fill}}
                                 r@{\extracolsep{\fill}}
                                 r@{\extracolsep{\fill}}
                                 r@{\extracolsep{\fill}}
                                 r@{\extracolsep{\fill}}
                                 r@{\extracolsep{\fill}}
                                 r@{\extracolsep{\fill}}
                                 r@{\extracolsep{\fill}}
                                 r@{}}
\hline
    &\multicolumn{3}{c}{$L$}      &\multicolumn{6}{c}{5} & \multicolumn{6}{c}{10} \\
    &\multicolumn{3}{c}{$\delta$} &\multicolumn{2}{c}{1.5} & \multicolumn{2}{c}{2.0} & \multicolumn{2}{c}{2.5} &
               \multicolumn{2}{c}{1.5} & \multicolumn{2}{c}{2.0} & \multicolumn{2}{c}{2.5} \\
$n$ &\multicolumn{3}{c}{Methods}& Sen. & Pre. & Sen. & Pre. & Sen. & Pre. & Sen. & Pre. & Sen. & Pre. & Sen. & Pre. \\ \hline
\multirow{6}{9mm}{1,000} &
\multirow{4}{6mm}{BWD}  &
  \multirow{2}{5mm}{.01}
     & cutoff1 & .195 	& .924 	& .581 	& .975 	& .919 	& .985 & .646 	& .983 	& .960 	& .993 	& .998 	& .994\\
&&   & cutoff2 & .229 	& .909 	& .640 	& .965 	& .925 	& .982 & .696 	& .971 	& .970 	& .983 	& .999 	& .985\\
&&\multirow{2}{5mm}{.05}
     & cutoff1 & .335 	& .819 	& .727 	& .910 	& .952 	& .933 & .777 	& .914 	& .983 	& .939 	& .999 	& .945\\
&&   & cutoff2 & .335 	& .819 	& .726 	& .913 	& .950 	& .943 & .773 	& .920 	& .982 	& .944 	& .999 	& .948\\
&\multicolumn{3}{c}{CBS}
      &  .165 	& .948 	& .555 	& .975 	& .900 	& .979 & .648 	& .972 	& .972 	& .976 	& .999 	& .977\\
&\multicolumn{3}{c}{WBS}
      &   .176 & .884	& .570	& .942	& .909	& .954 & .634 	& .948 	& .971 	& .956 	& .999 	& .957\\
&\multicolumn{3}{c}{LRS}
      & .216 	& .911 	& .638 	& .973 	& .942 	& .983 & .701 	& .979 	& .984 	& .988 	&1.000 	& .989\\ \hline 

\multirow{6}{9mm}{3,000} &
\multirow{4}{6mm}{BWD}  &
  \multirow{2}{5mm}{.01}
     & cutoff1 & .123 	& .966 	& .493 	& .990 	& .856 	& .995 & .543 	& .992 	& .954 	& .997 	& .999 	& .999 \\
&&   & cutoff2 & .161 	& .947 	& .578 	& .985 	& .902 	& .992 & .610 	& .988 	& .965 	& .995 	& .999 	& .997 \\
&&\multirow{2}{5mm}{.05}
     & cutoff1 & .229 	& .911 	& .653 	& .972 	& .926 	& .982 & .700 	& .972 	& .980 	& .982 	& .999 	& .988 \\
&&   & cutoff2 & .250 	& .900 	& .676 	& .964 	& .931 	& .976 & .718 	& .968 	& .982 	& .978 	&1.000 	& .981 \\
&\multicolumn{3}{c}{CBS}
               & .103 	& .960 	& .490 	& .984 	& .882 	& .985 & .572 	& .983 	& .974 	& .982 	& .999 	& .983 \\
&\multicolumn{3}{c}{WBS}
         	   & .079  	& .967 	& .399 	& .991 	& .819  & .990 & .470 	& .991 	& .938 	& .992 	& .999  & .990\\
&\multicolumn{3}{c}{LRS}
               & .152 	& .948 	& .553 	& .986 	& .898 	& .993 & .606 	& .988 	& .972 	& .994 	& .999 	& .996 \\ \hline 

\multirow{6}{9mm}{5,000} &
\multirow{4}{6mm}{BWD}  &
  \multirow{2}{5mm}{.01}
     & cutoff1 & .116 	& .967 	& .482 	& .994 	& .863 	& .997 & .532 	& .992 	& .945 	& .996 	& .999 	& .996\\
&&   & cutoff2 & .112 	& .971 	& .472 	& .995 	& .852 	& .998 & .517 	& .992 	& .941 	& .997 	& .999 	& .997\\
&&\multirow{2}{5mm}{.05}
     & cutoff1 & .212 	& .936 	& .632 	& .981 	& .924 	& .991 & .667 	& .981 	& .974 	& .989 	& .999 	& .991\\
&&   & cutoff2 & .213 	& .937 	& .623 	& .983 	& .917 	& .993 & .664 	& .982 	& .974 	& .989 	& .999 	& .991\\
&\multicolumn{3}{c}{CBS}
               & .088 	& .973 	& .468 	& .989 	& .890 	& .987 & .551 	& .983 	& .966 	& .985 	&1.000 	& .987\\
&\multicolumn{3}{c}{WBS}
               & .052  	& .985	& .327 	& .994 	& .730 	& .995 & .383  	& .994 	& .891 	& .995 	& .997 	& .995\\
&\multicolumn{3}{c}{LRS}
               & .130 	& .962 	& .512 	& .992 	& .881 	& .996 & .563 	& .992 	& .957 	& .997 	& .999 	& .997\\ \hline 
\end{tabular*}
\caption{Performances under normal error - The BWD and LRS outperform the CBS and WBS. The backward detection with $\alpha = .05$ ($.01$) shows higher (lower) sensitivity but lower (higher) precision compared to the LRS.} \label{tb::normal}
\end{center}
\end{table}

In many applications including CNV detection, normality assumption is often used. Although our backward procedure does not strongly require normality assumption, it performs best under normal noise due to the use of squared error loss. Table \ref{tb::normal} reports performance of different methods considered in various scenarios under normality. The both original and modified versions of BWD outperform others except LRS in most scenarios considered. This is because the true signal is very short ($L = 5$ and $10$). LRS performs quite well. This is not surprising since the designed simulation model perfectly satisfies assumptions required for the LRS. The modification for epidemic change-points is useful when the true signals are not strong. TVP is very fast, but gives too many false positives. TGUH shows very low sensitivity indicating that it cannot detect short signals well. It is interesting to observe that the BWD still performs comparably well in the sense that it outperforms the LRS in terms of sensitivity with $\alpha = 0.05$ and in terms of precision with $\alpha = 0.01$. It is another benefit of the BWD that it controls relative importance between sensitivity and precision (or specificity) through the target level $\alpha$. We also remark that the BWD is simple and does not require stringent model assumptions such as sparsity.

\subsection{Non-Gaussian Error}

Performance of the change-point detection methods are evaluated under t-distributed noise. The BWD does not require normality assumption and hence not overly sensitive to the violation of normality, whereas LRS does. Before applying the LRS, we standardized the observations first by using sample mean and sample standard deviation. Remark that such naive estimates would work fairly well for standardizing observation since the signals are very short compared to the entire sequence of data. Table \ref{tb::t} contains numerical performance of the methods under consideration. Advantages of the BWD are much clearer than the previous setting with normality. The CBS fails to detect true signals when the signal strength is not very strong while the backward procedure performs well in all the scenarios considered. Both of the WBS and LRS are good in terms of sensitivity but they detect too many false signals in this case. Again, BWD2 outperforms BWD1 when the true signals are not strong.

\begin{table}[h]
\begin{center}
\caption{Performances under t-distributed error - The BWD outperforms all other methods for detecting short signals. } \label{tb::t}
\begin{tabular*}{30pc}{@{\hskip5pt}@{\extracolsep{\fill}}r@{}l@{}r@{}r@{}r@{}r@{}r@{}r@{}r@{}r@{}@{\hskip5pt}}
\toprule
    \multicolumn{2}{c}{$L$}      &\multicolumn{4}{c}{5} & \multicolumn{4}{c}{10}\\
    \multicolumn{2}{c}{$df$} &\multicolumn{2}{c}{10} & \multicolumn{2}{c}{5} &
                              \multicolumn{2}{c}{10} & \multicolumn{2}{c}{5} \\
\multicolumn{2}{c}{Methods}& Sen. & Pre. & Sen. & Pre. & Sen. & Pre. & Sen. & Pre.\\ \hline
\multirow{2}*{BWD1}
     & .01 & .932	& .990	& .589	& .982	& 1.000	& .993	& .968	& .983\\
     & .05 & .965	& .949	& .879	& .879	& 1.000	& .952	& .994	& .942\\
\multirow{2}*{BWD2}
     & .01 & .936	& .993	& .606	& .985	& 1.000	& .996	& .967	& .986\\
     & .05 & .969	& .952	& .876	& .893	& 1.000	& .958	& .993	& .954\\
\multicolumn{2}{c}{CBS}
               & .853 	&.987 	&.287 	&.986 &  .998 	&.986 	&.864 	&.990\\
\multicolumn{2}{c}{WBS}
               &.981    &.867   &.939 	&.644 & 1.000   &.865   &.999   &.651\\
\multicolumn{2}{c}{LRS}
               & .983 	&.711 	&.907 	&.338 & 1.000 	&.741 	&.999 	&.375\\ 
\multicolumn{2}{c}{TVP}
               & .997   &.183 	&.974   &.212 & 1.000   &.204 	&.999   &.230\\ 
\multicolumn{2}{c}{TGUH}
               & .273   &.720 	&.545   &.619 &  .391   &.752 	&.699   &.666\\ \hline
\end{tabular*}
\end{center}
\end{table}

\subsection{Empirical Test Level}

\begin{table}[h] 
\begin{center}
\caption{Estimated level - The cutoff1 performs well under normality assumption is true and the cutoff2 can be used if the normality assumption is suspicious.} \label{tb::level}
\begin{tabular*}{30pc}{@{\hskip5pt}@{\extracolsep{\fill}}c@{}c@{}c@{}c@{}c@{}c@{}c@{}c@{}@{\hskip5pt}} \hline
\multicolumn{2}{l}{} & &\multicolumn{3}{c}{Normal}            & t(10) & t(5) \\
\multicolumn{2}{l}{Methods} & $\alpha$ &    $n=1000$ &  $n=3000$ &	$n=5000$ &	  $n = 1000$ &   $n = 1000$ \\ \hline
\multirow{4}*{BWD}
&\multirow{2}*{cutoff1}
 &0.01   &	.011 &	.013 & 	.009 &	.032 &	.030 \\
&&0.05   &	.051 &	.046 & 	.058 &	.077 &	.065 \\ \cline{2-8}
&\multirow{2}*{cutoff2}
&0.01   &	.014 &	.018 & 	.010 &	.007 &	.002 \\
&&0.05   &	.065 &	.070 &	.053 &	.052 &	.022 \\ \hline
\multicolumn{2}{c}{CBS} &N/A&    .007&	.011& 	.010& 	.003&	.003 \\
\multicolumn{2}{c}{WBS} &N/A&    .016&	.005& 	.004& 	.112&	.395\\
\multicolumn{2}{c}{LRS} &N/A&    .022&	.028& 	.028& 	.388&	.905 \\
\multicolumn{2}{c}{TVP} &N/A&    .966&	.609&	.600&	.968&	.925 \\
\multicolumn{2}{c}{TGUH}&N/A&    .030&	.023& 	.019& 	.098&	.210 \\ \hline
\end{tabular*}
\end{center}
\end{table}

We numerically check whether the backward procedure actually attains a target nominal level $\alpha$ under the null hypothesis that there exists no signal. Since the two version of BWD show similar result we report the results of the original version only to avoid redundancy. We generate samples under the null hypothesis by letting $\delta = 0$ and report the proportion of the cases that any signal is detected by each methods (Table \ref{tb::level}). Recall that we have two scenarios. The `cutoff1' assumes normality and hence the levels are correct if the data are indeed from normal model but cannot satisfy nominal level if the data are from t-distribution. In this case, the `cutoff2' can be used as an alternative and the results seem good enough to be used in practice. Notice that there are a couple of cases in which `cutoff2' fails to attain the nominal level, which is partially due to  the uncertainty of the null distribution. The CBS seems very conservative to detect signal and both LRS and TGUH break down when the normality assumption is not valid. TVP fails again to control the type I error. 
As aforementioned, it is another distinguished advantage of the proposed BWD to be able to control type 1 error. This is practically attractive since the relative importance of sensitivity and specificity varies depending on applications.

\section{Real Data Illustration} \label{s:real}

\subsection{Trio Data from SNP array}

The (original) BWD is demonstrated for the SNP array data collected from the Autism Genetics Resource Exchange \citep[AGRE,][]{bucan2009genome}. The data set contains three parallel sequences of log $R$ ratio (LRR) for 547,458 SNPs over 23 chromosomes of father-mother-offspring trio.

\begin{figure}[!htbp]
\begin{center}
\subfigure[chromosome1]{
\includegraphics[height = 0.7\textwidth, angle = 270]{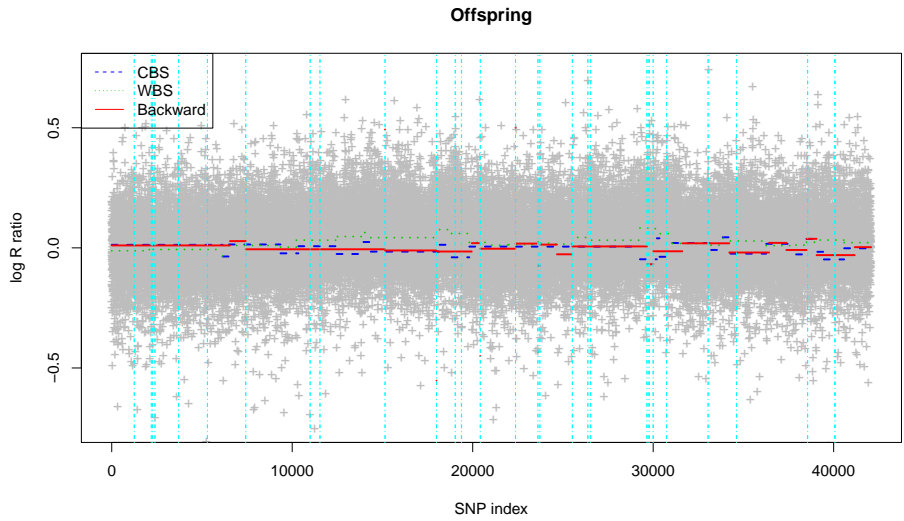}}
\subfigure[chromosome2]{
\includegraphics[height = 0.7\textwidth, angle = 270]{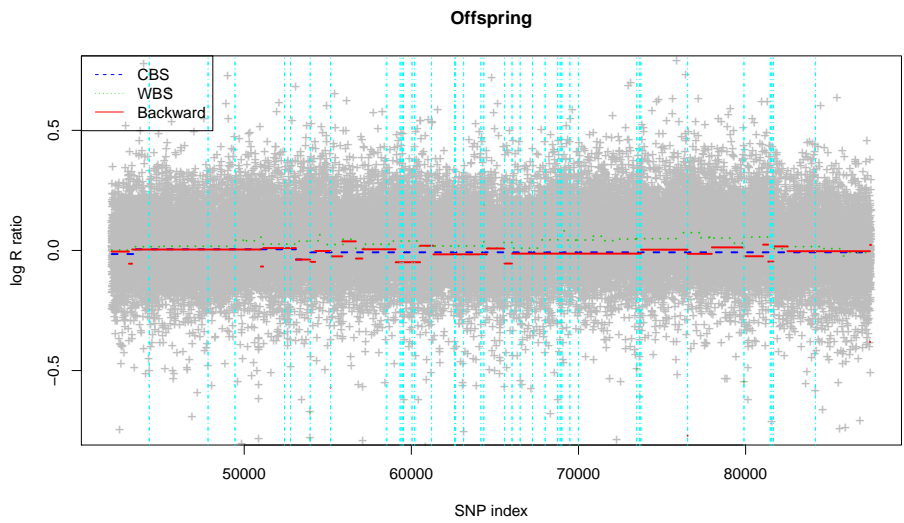}}
\caption{CNV detection results for chromosome 1 and 2 of offspring - All of the CBS, WBS, and BWD show quite different results. The vertical lines represent the signals detected by LRS.} \label{fg::result}
\end{center}
\end{figure} 

All methods considered in Section \ref{s:sim} are applied except TVP and TGUH. For LRS, the data are standardized by the sample mean and variance. We set $\alpha = 0.05$ and the corresponding cutoff value is approximated from the log-linear relation between cutoff and sample size under normal assumption as described in Section \ref{s:cutoff}. We apply each of the methods in chromosome-wise and Figure \ref{fg::result} shows the results for the first two chromosomes (chromosomes 1 and 2) of the offspring. The three different types (and colors) of horizontal segments are estimated $\mu_t$ by the CBS, LRS and BWD, respectively. Notice that the LRS only detects very short and sparse signals and the detected signals are marked as vertical lines. We would like to point out that although all of the CBS, WBS and BWD are developed under a similar framework, the results are quite different. For example in chromosome 2 (Subfigure (b)), the CBS detects no change point after around 54,000th SNP position, while both the BWD and WBS detect several.

\begin{figure}[h]
\centering
\includegraphics[width = 0.9\textwidth]{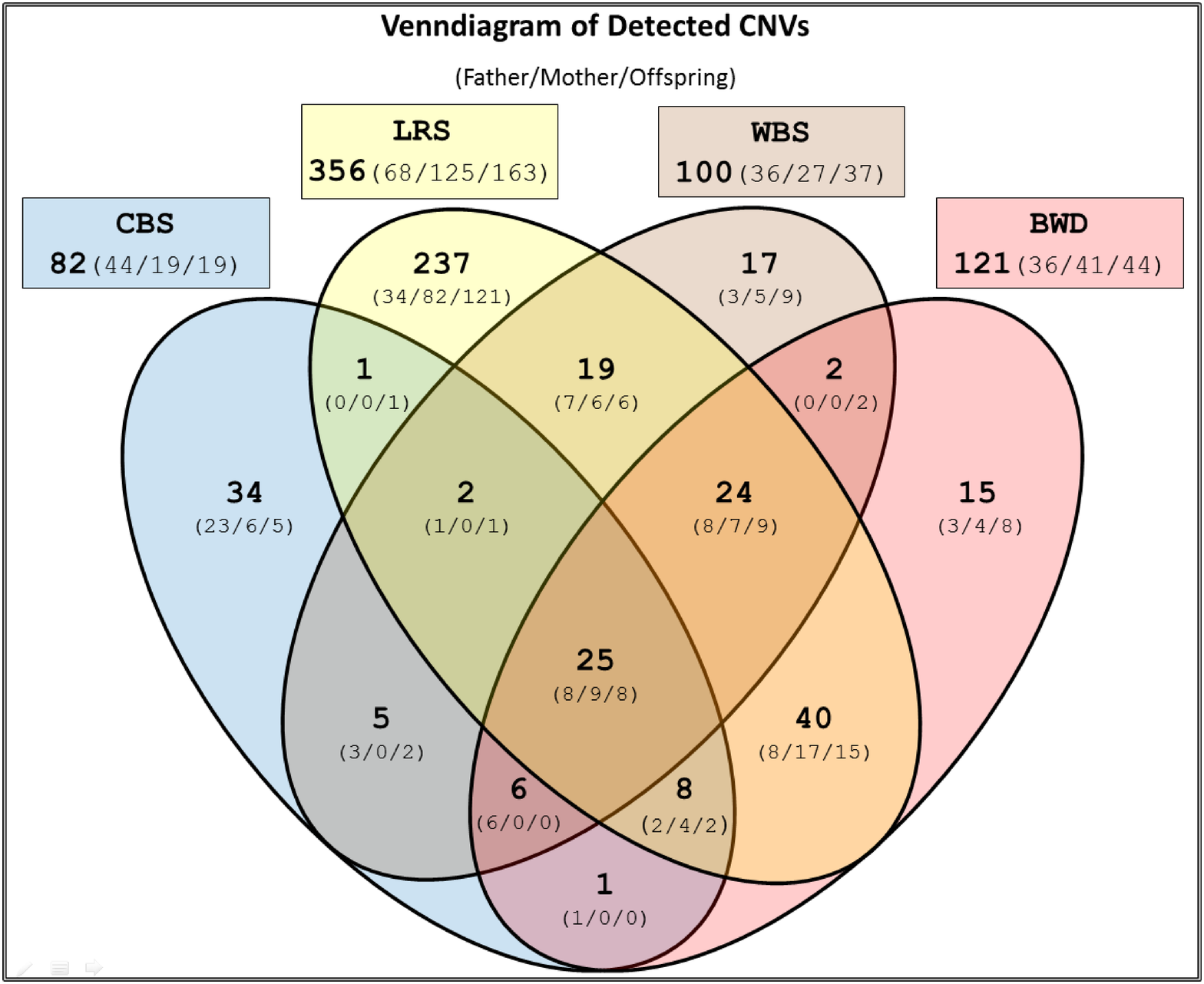}
\caption{Vendiagram of detected CNVs (father/mother/offspring) - The BWD calls the least number of unique CNVs most of which are likely to be false positives, while it misses only 2 CNVs that are identified by all other methods, while CBS, WBS, and LRS miss 24, 8, and 6 such CNVs, respectively.
}\label{fg::ven}
\end{figure} 

The complete CNV detection results for the trio data are summarized by a Venn diagram in Figure \ref{fg::ven} that reports the number of CNVs detected by different methods for each of the trio (father/mother/offspring) as well as the collapsed.
We consider  short detected segments  whose lengths are between 2 and 200 in terms of SNP index as CNVs.
First, the LRS detects a much larger number of CNVs than other competing methods, while majorities ($237/356 = 66.5\%$) are unique calls which are likely to be suspicious as false signals. The BWD calls 121 CNVs which is larger than those from the CBS (84) and WBS(100), while unique calls by the BWD is only 12.4\% (15/121) smaller than any others (CBS: 34/82 = 41\%; WBS: 17/100 = 17\%). This can be interpreted that the BWD shows best precision if we assume that the most of CNVs uniquely called by a single method are false positives. Next, BWD misses only 2 CNVs that are identified by all other methods, while CBS, WBS, and LRS miss 24, 8, and 6 such CNVs, respectively, meaning that the BWD outperforms others in terms of sensitivity as well.
Finally, 25 CNVs identified by all the methods can be regarded as true CNVs and used in Figure \ref{fg::illustration} in order to show that short CNVs are indeed common in real data.

\begin{table} [p] 
\caption{Offspring's CNVs detected from one/both of parents - For each of CNVs, starting SNP indices are presented along with the size in parentheses.} \label{tb::cnvs}
I. with father
\begin{center}
\begin{tabular} {cc rrr rrr rrr rr} \hline
Index &&\multicolumn{2}{c}{CBS} && \multicolumn{2}{c}{WBS} && \multicolumn{2}{c}{LRS} && \multicolumn{2}{c}{BWD}\\
 \cline{1-1} \cline{3-4} \cline{6-7} \cline{9-10}  \cline{12-13}
1&&        &  	   && 22369   & (7) && 22369	&   (7) &&  22369	&   (7) \\
2&&        &       &&         &     && 76529	&   (3) &&  76529	&   (3) \\
3&&163327  & (18)  &&         &     &&163331	&  (14) && 163331	&  (15) \\
4&&252509  &  (2)  && 252509  & (2) &&252509    &   (2) &&          &       \\
5&&        &  	   &&         &     &&325625	&   (6) &&        	&       \\
6&& 359377 & (11)  && 359372  & (14)&&       	&       &&        	&       \\
7&&        &       &&         &     &&379916	&   (2)	&& 379916	&   (2) \\
8&&392433  &  (5)  &&         &     &&392433	&   (5)	&& 392433	&   (4) \\
9&&507037 &  (6)  && 507039  & (4) &&507038	&   (5)	&& 507038	& (122) \\
10&&       &       && &  &&561443	&   (2)	&& 561443	&   (2) \\ \hline
\end{tabular}
\end{center}

II. with mother
\begin{center}
\begin{tabular} {cc rrr rrr rrr rr} \hline
Index &&\multicolumn{2}{c}{CBS} && \multicolumn{2}{c}{WBS} && \multicolumn{2}{c}{LRS} && \multicolumn{2}{c}{BWD}\\
\cline{1-1} \cline{3-4} \cline{6-7} \cline{9-10} \cline{12-13}
1&&       &       && & &&130814	&  (10)	&& 130814	&   (7) \\
2&&       &       && & &&228119	&   (6)	&& 228119	&   (6) \\
3&&       &       && & &&277328	&   (2)	&& 277328	&   (2) \\
4&&363744 & (9)	  && 363744 & (9) &&363744	&   (9)	&& 363744	&   (9) \\
5&&       &		  && 414247 & (3) &&414247	&   (3)	&& 414247	&   (3) \\
6&&       &    	  && & &&457597	&   (3)	&& 457597	&   (3) \\
7&&       &       && 519555 & (10)&&519555	&  (10)	&& 519555	&  (10) \\
8&&532211 &  (7)  && 532210 & (8) &&532211	&   (7) && 532210	&   (8) \\ \hline
\end{tabular}
\end{center}

III. with both father and mother
\begin{center}
\begin{tabular} {cc rrr rr ccc }  \hline
Index && \multicolumn{2}{c}{LRS} && \multicolumn{2}{c}{BWD} && \multicolumn{2}{c}{Comments} \\ \cline{1-1} \cline{3-4} \cline{6-7} \cline{9-10}
1&& 53949	&   (2) &&  53949	&   (2) && &\\
2&&152827	&   (2) && 152827	&   (2) && &\\
3&&359377	&  (11) && 359377	&   (9) && \multicolumn{2}{c}{CBS missed father.}\\
4&&442247	&   (4) && 442243	&   (8) && \multicolumn{2}{c}{WBS missed mother.}\\
5&&547470	& (146) && 547459	& (157) && \multicolumn{2}{c}{WBS missed mother.}\\
\hline
\end{tabular}
\end{center}
\end{table} 

The genetic information is to be inherited from parents to offsprings and can be utilized for validation of the detected CNVs. Table \ref{tb::cnvs} lists all the offspring's CNVs that are also detected from one/both of the parents. All the CNVs in Table \ref{tb::cnvs} are nearly, if not exactly, identical to the corresponding ones detected from the parents and thus those CNVs are considered as truth. We would like to emphasize that most of the true CNVs are quite short and both of the CBS and WBS miss many of them while the LRS and BWD miss only 1 and 3, respectively. We claim that some jointly detected CNVs from (at least one of) parents and offspring are still suspicious as false, if only a very minor portion of the detected CNVs overlapped compared to their entire length. The LRS detects 9 such suspicious CNVs while the CBS, WBS, and BWD detect 1, 1, and 2, respectively.

In summary, from the real data analysis for the trio SNP array, the LRS tends to call too many CNVs that includes large number of false positive while the CBS and WBS miss some true shorts CNVs, and we can conclude that the proposed BWD outperforms all others. This is concordant to the findings in the simulation studies in Section \ref{s:sim}.

\subsection{Read Depth from NGS Sequencing Data}
We further illustrate the (original) BWD on the RD data from high-throughtput sequencing data on chromosome 19 of a HapMap Yoruban female sample (NA19240) from the 1000 Genomes Project. The RD $y_i$ of the $i$th locus where $i = 1, \cdots, 54,361,060$ is adjusted by the guanine-cytosine (GC) content. Although the data can be used to analyze genomic variants in higher resolution with raw measure, as aforementioned the observations are highly flexible due to complicated sequencing process and requires a proper normalization/transformation. In order to handle these difficulties, we consider a local-median transformation as motivated by \cite{tony2012robust}. In particular, we firstly partition the RD data into small bins with size $M$, and then apply BWD for the sequence of the medians of observations in each bins. The transformed data sequence is then well-approximated by a normal distribution regardless the underlying distribution of the original data. If $M$ is large, the data are more accurately approximated by normal model, but the CNV shorter than $M$ bps cannot be accurately identified. (i.e., $M$ is a minimal resolution). As shown in Section \ref{s:sim} the BWD is not overly-sensitive to violation of the normal assumption and we set a relatively small number of $M = 100$ in the analysis.

For the BWD we set $\alpha = 0.05$ and the cutoff value is computed under the normal assumption. The BWD called fifteen CNVs. Figure \ref{fg::rd} provides zoom-in plots of some of CNVs identified by the BWD. The proposed method works reasonably  well for the (NGS) read-depth data as well after simple transformation.

Many existing CNV analysis tools for  high-throughput NGS data employ the CBS as a primal tool for calling CNVs. See \cite{duan2013comparative} and reference therein.
We would like to remark that the BWD can be a desirable  alternative under the presence of short CNVs hardly detected by the CBS.

\begin{figure}[p]
\centering
\subfigure{\includegraphics[width = 0.45\textwidth]{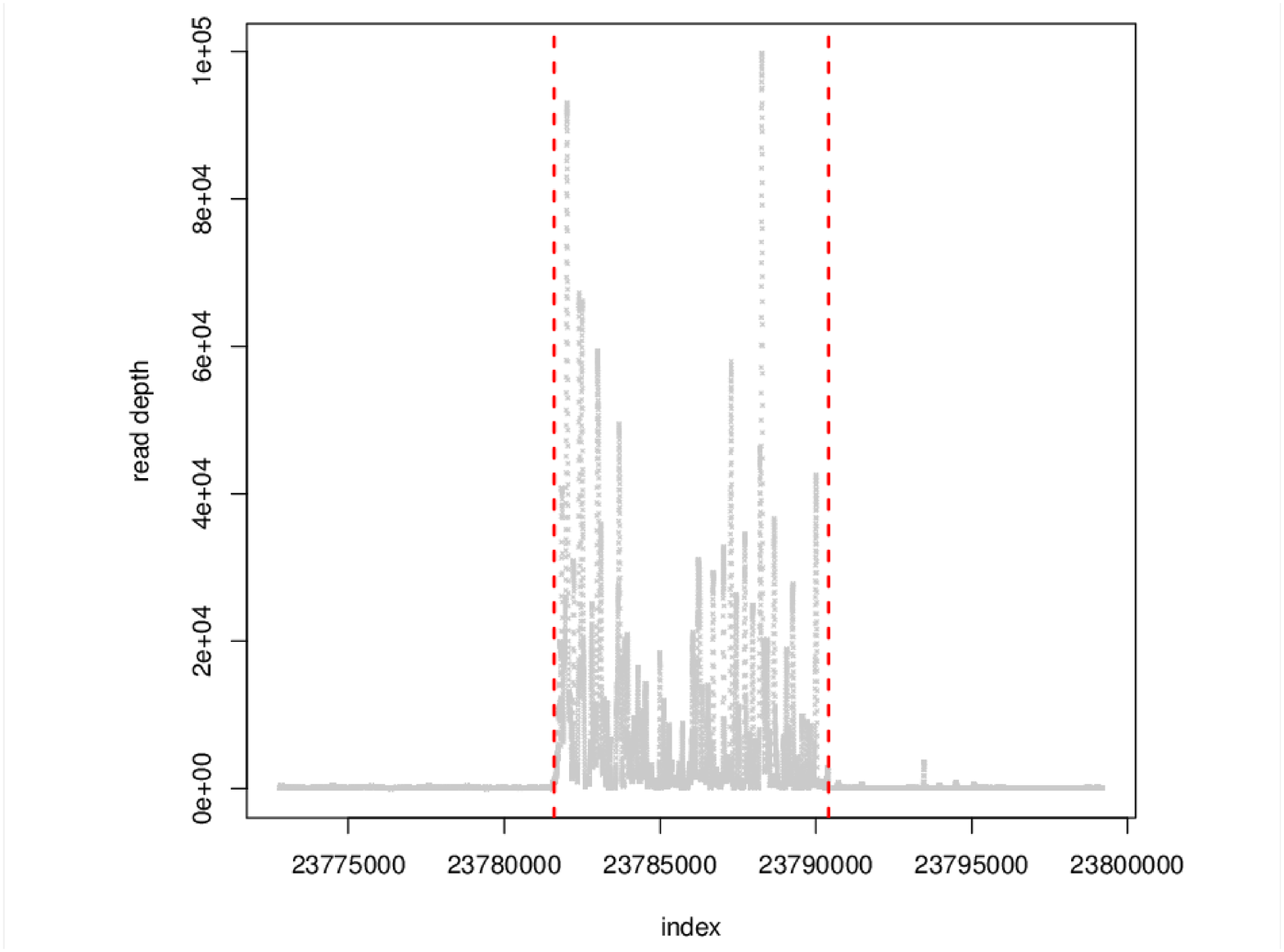}}
\subfigure{\includegraphics[width = 0.45\textwidth]{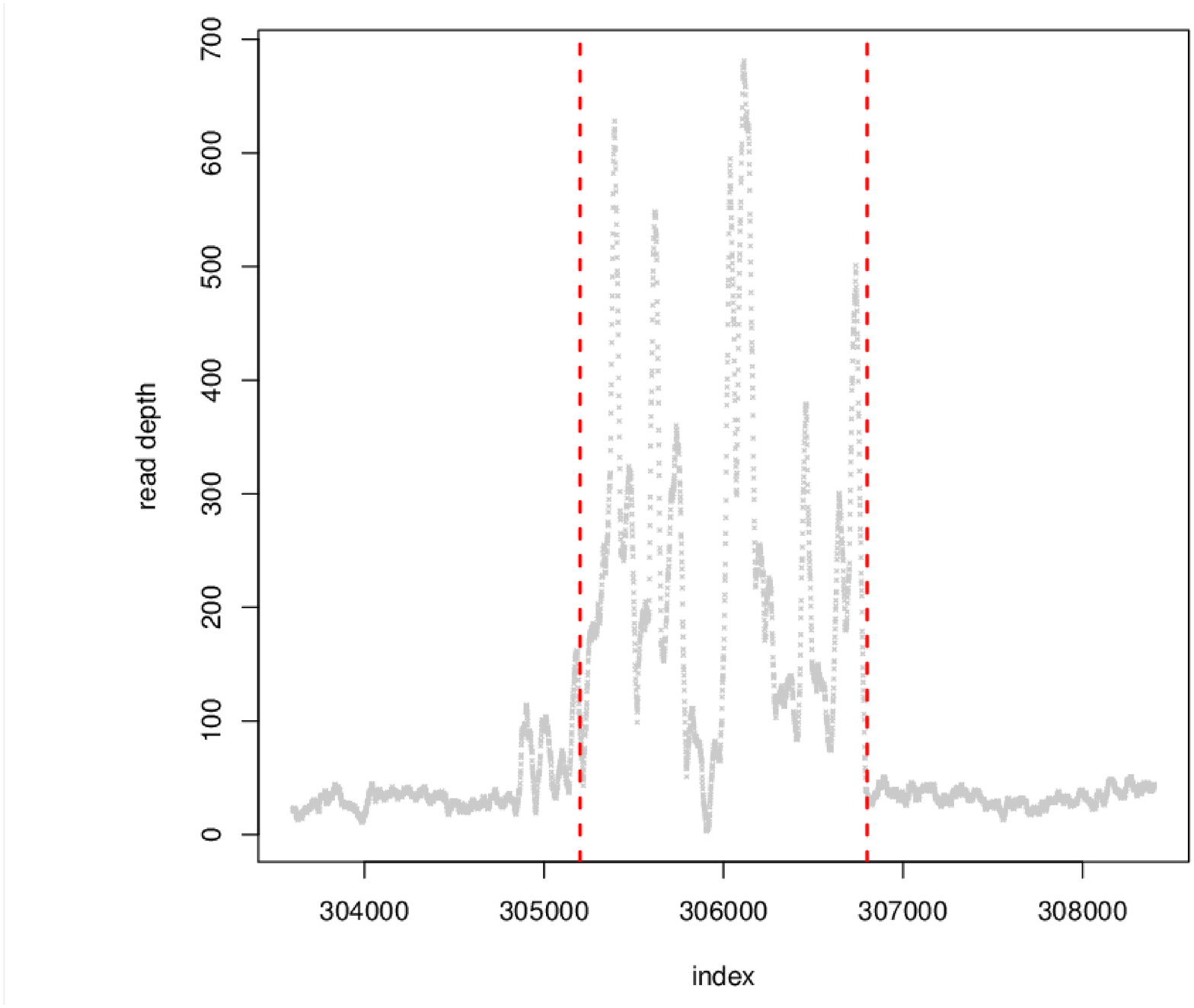}}
\subfigure{\includegraphics[width = 0.45\textwidth]{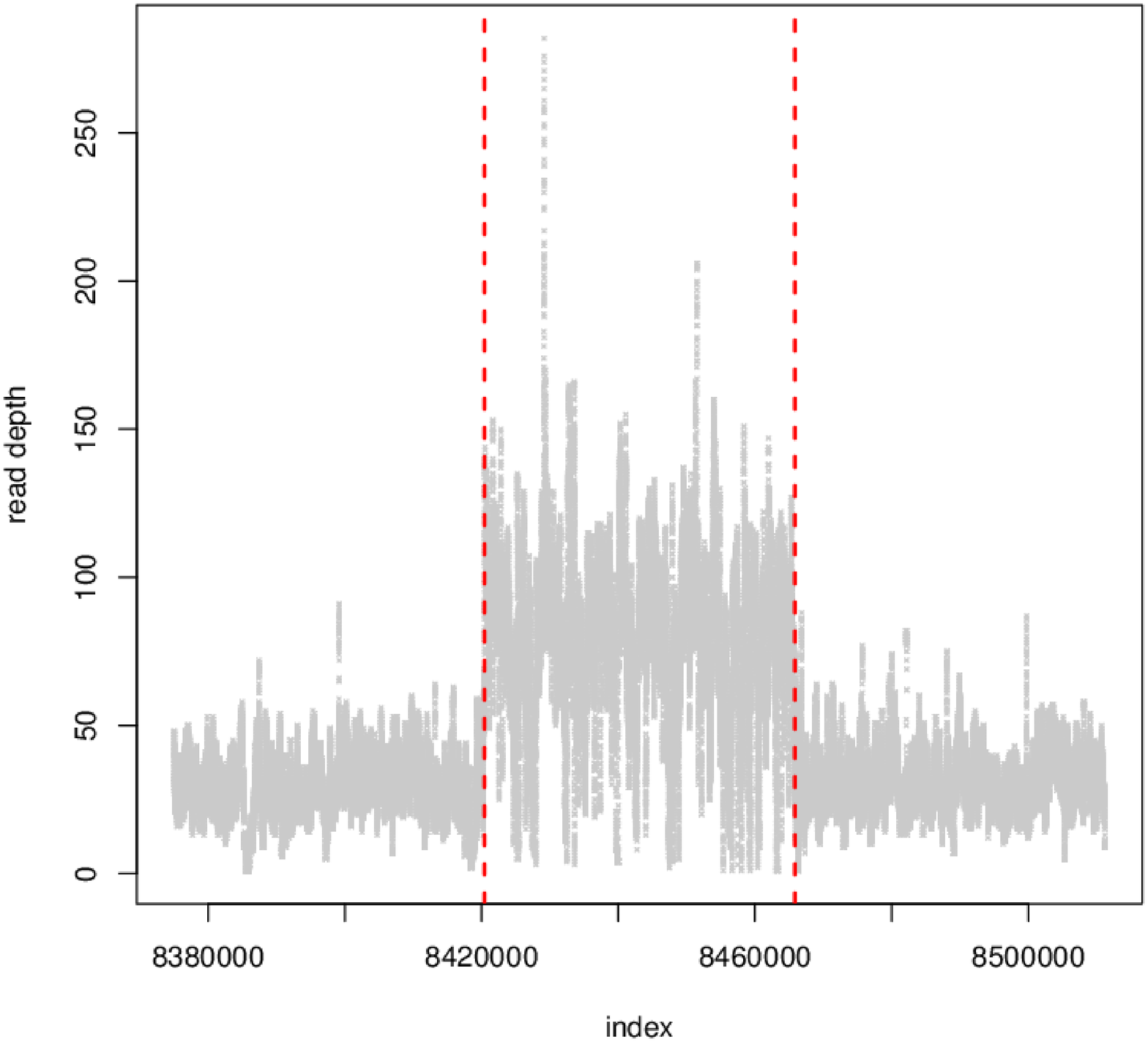}}
\subfigure{\includegraphics[width = 0.45\textwidth]{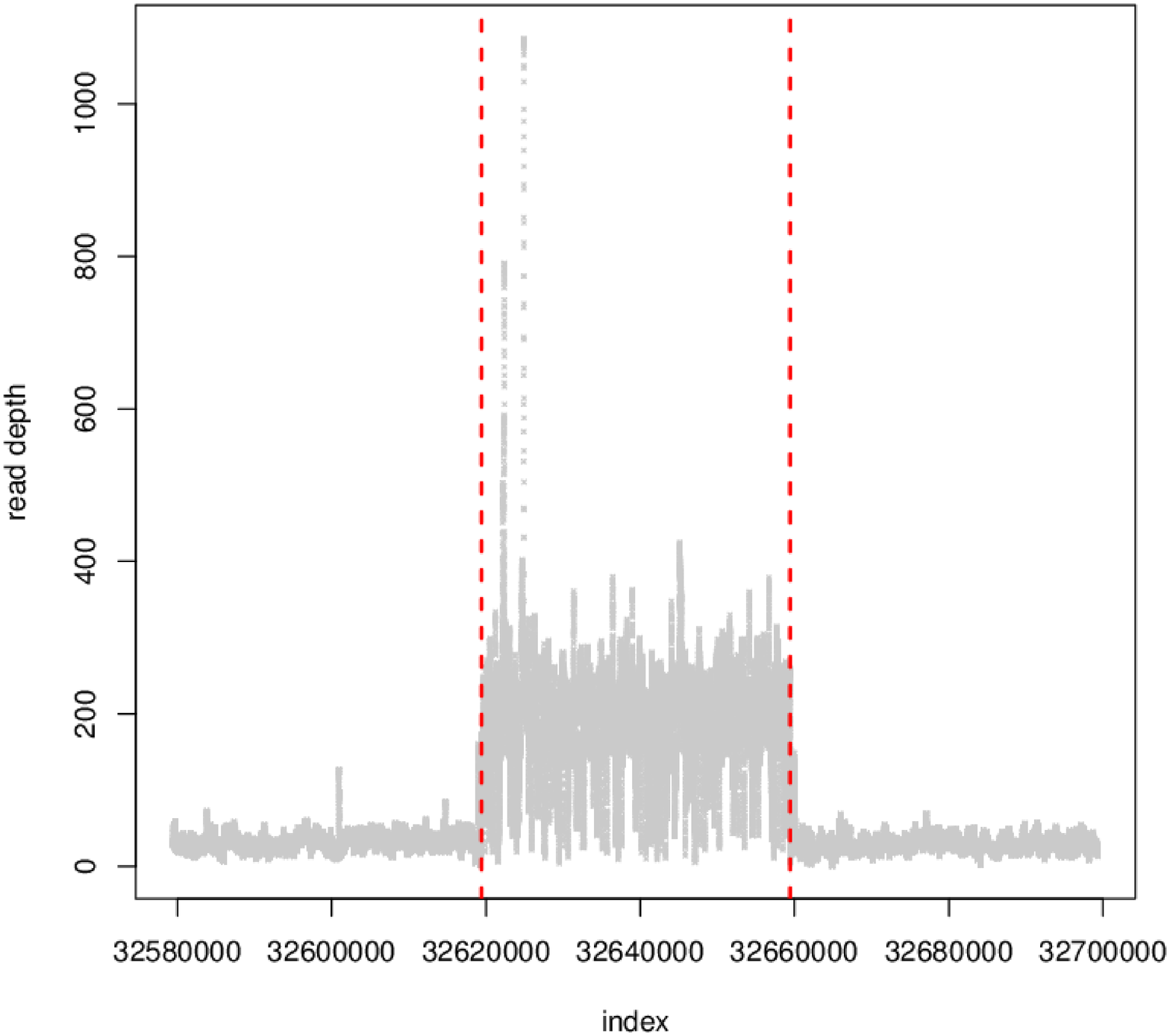}}
\subfigure{\includegraphics[width = 0.45\textwidth]{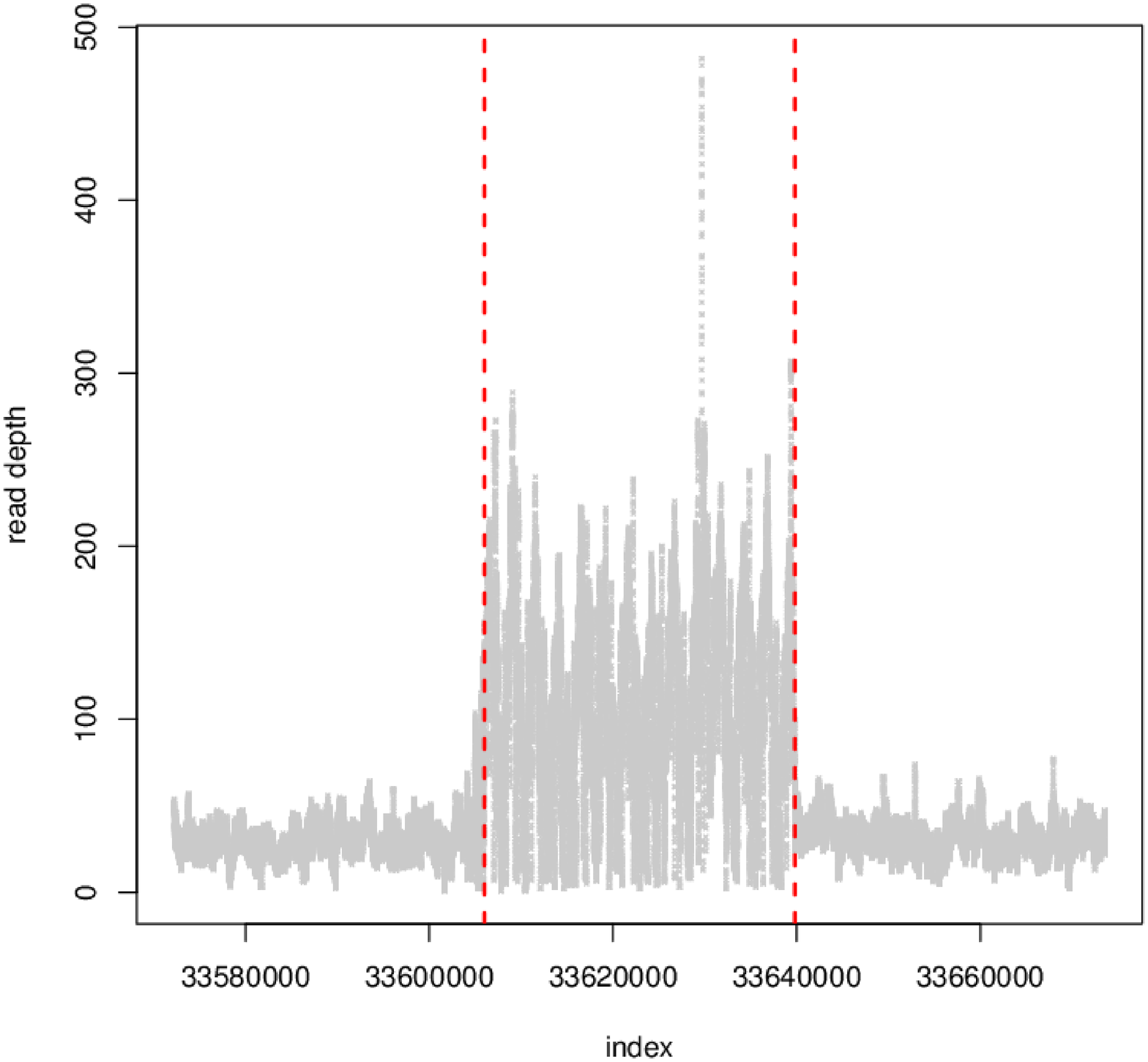}}
\subfigure{\includegraphics[width = 0.45\textwidth]{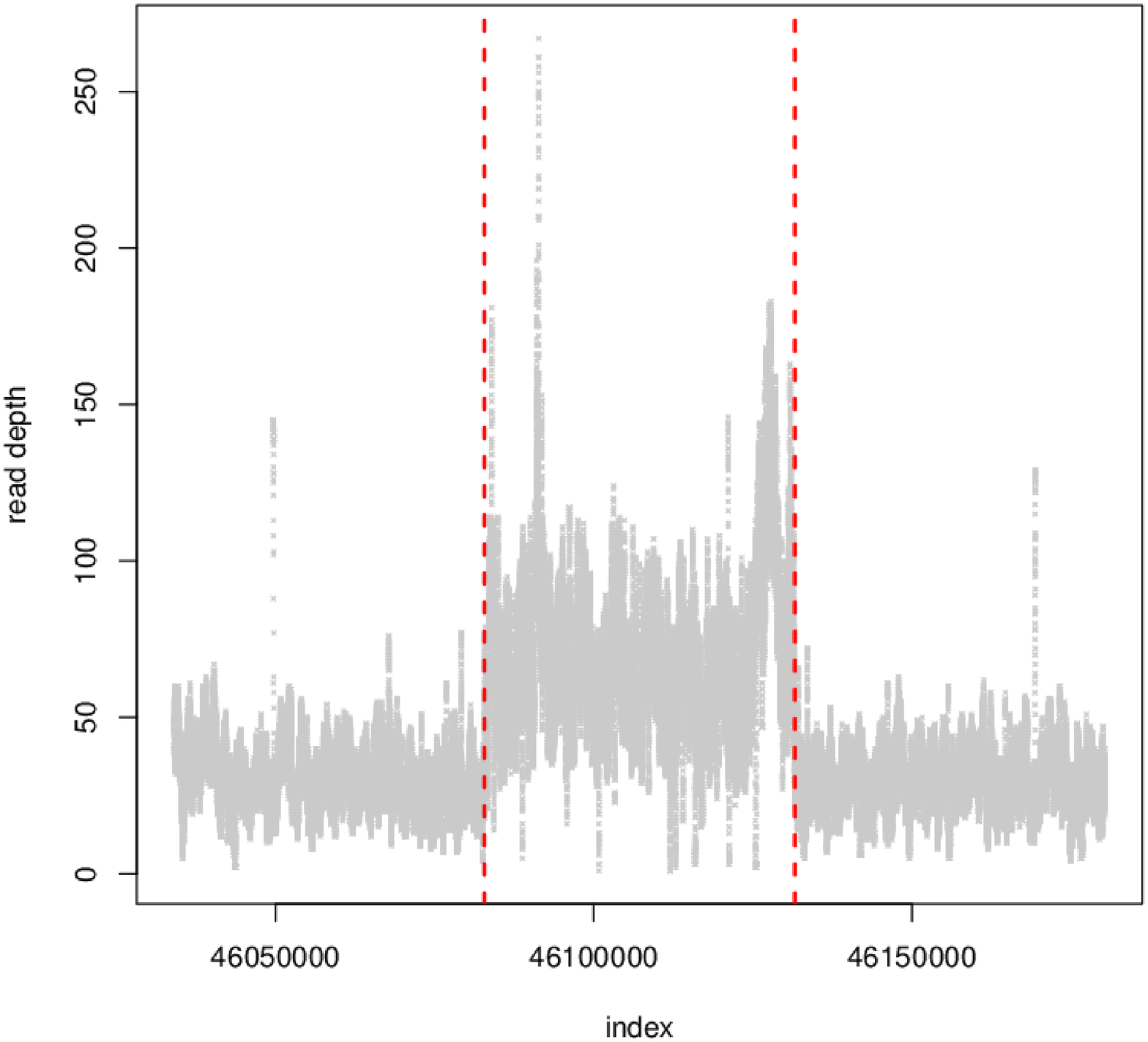}}
\caption{Zoom-in plots of the CNVs identified by the backward detection from the NGS read-depth data: The CNVs can be detected directly from the NGS read-depth after simple local median transformation.}\label{fg::rd}
\end{figure}

\section{Discussion} \label{s:discussion}
We propose a BWD procedure for change-point detection and apply it to CNV detection. The proposed BWD is a simple procedure that can be readily employed for high-dimensional data, but it still performs very well as illustrated to both simulated and real data especially when the true signals of interest are short, which is often the case in CNV detection problem. Similar to the CBS, the BWD is a general approach for change-point detection problems that can be used in various applications besides CNV detection from which it originally motivated, since it does not depend on any application-specific assumption.

The simple idea of the proposed BWD provides a possibility of further extension in various ways. Firstly, the gain of backward procedure compared to forward detection including CBS is obvious for short signal detection. However, the forward detection also has a clear benefit when the true signal is long and the mean-change is minor. Thus we can select either of the two depending on application. Moreover, we can develop a method that hybrids between the forward and backward detections analogous to the stepwise variable selection in the regression context. The idea is straightforward but require additional effort to improve computational efficiency especially for CNV applications. Next, we can extend BWD with different loss functions other than the squared $L_2$ loss. For example, the absolute deviance error can be used as a reasonable alternative under the presence of outliers. It is also possible to generalize the idea to more complex structure such as graph \citep{chen2015graph} by introducing a proper loss function defined on the space on the complex data object. Finally, as motivated by trio data, the backward idea can be extended to detect common signals shared by multiple sequences of observations.

\bibliographystyle{dcu}
\bibliography{references}
\end{document}